\newtheorem{proposition}{Proposition}
\newtheorem{lemma}{Lemma}
\newtheorem{remark}{Remark}
\begin{document}

\sf
\title{\bf Climate change analysis from LRD manifold functional regression
}
\date{}

 \maketitle


\author{Diana  P. Ovalle--Mu\~noz$^{1},$ M. Dolores Ruiz--Medina$^{1}$}

\noindent{\small $^{1}$\ Department of Statistics and Operation Research, University of Granada
}

\begin{abstract}This work is motivated by the problem of predicting  downward solar radiation flux spherical maps from the observation of
atmospheric pressure at high cloud bottom. To this aim  nonlinear functional regression is implemented under strong--correlated functional data. The link operator  reflects   the heat transfer    in the atmosphere.  A latent  parametric  linear  functional regression model reduces uncertainty in the support of this operator.  An additive  long--memory manifold--supported functional time series error  models persistence in time of random fluctuations observed in the response. Time is incorporated via the scalar covariates in the latent  linear  functional regression model. The functional  regression parameters in this model are supported on a connected and compact two point homogeneous space. Its
 Generalized Least--Squares (GLS) parameter estimation   is achieved. When  the second--order structure of the functional error term  is unknown, its minimum contrast estimation is obtained in the spectral domain. The performance  of the theoretical and plug--in  nonlinear functional  regression predictors   is illustrated in the  simulation study undertaken in the sphere.  The  Supplementary Material provides a detailed empirical analysis in the one way ANOVA context.
   The real--data application extends the   purely spatial statistical analysis of atmospheric pressure at high cloud bottom, and downward solar radiation flux in \cite{Alegria21} to the spatiotemporal context.

\end{abstract}

\noindent \emph{Keywords}   Connected and compact two--point homogeneous spaces, LRD  manifold--supported  functional time series, temporal strong correlated manifold   map data, manifold multiple functional regression.

\maketitle

\section{Introduction}
\label{sec:1}

Solar radiation has experimented important intensity and distributional changes    in the last few decades affecting  climate and global warming.  Well--known factors  are, for instance,  greenhouse gases,  trapping  more heat in the atmosphere, and increasing the Earth's surface temperature.  Atmospheric aerosols (e.g.,  smog or pollution particles) can scatter solar radiation,  and  locally modify the  energy reaching the Earth's surface.  The  Hydrological Cycle is also affected  by the evaporation caused by high  temperatures produced by  solar radiation   on  the surface, affecting atmospheric pressure and precipitation.
Thus, the interaction between distribution and intensity of solar radiation, and atmospheric pressure seriously affects regional and global climate systems (see, e.g., \cite{Forster}; \cite{Wild09}). It is well-known that solar radiation and atmospheric pressure are interconnected through thermodynamic, radiative, and fluid dynamics processes involved in   the formulation of  associated physical equations like Radiative Transfer, Hydrostatic, Ideal Gas, Energy Balance, and Atmospherical  Heat Transfer equations.

The   functional regression  approach proposed in this work is motivated by the above    physical  research subarea.  The physical law governing the  coupled dynamic of atmospheric pressure at high cloud bottom, and downward solar radiation flux is incorporated in the definition of  the link nonlinear operator, modelling heat transfer    in the atmosphere.  Restricting our attention to this thermodynamics law means that we are  ignoring the physical information supported by the remaining  physical equations, introducing uncertainty in our model, reflected in the  observed random   fluctuations of the  response  given by the solar radiation flux. In our modelling framework,  a  latent parametric  functional linear  regression  model is considered to reduce   uncertainty in  the support of our  link nonlinear  operator.   In this modelling framework, we also incorporate an additive   strong--dependent functional time series error term representing the   structured   small--scale random  fluctuations in space and time.

The adopted   $H$--valued time series framework  refers to the separable Hilbert space  $H=L^{2}\left(\mathbb{M}_{d},d\nu,\mathbb{R}\right)$ of real--valued square integrable functions with compact
 support contained in the manifold
 $\mathbb{M}_{d}.$ Here,  $d\nu$ denotes  the normalized Riemannian measure   on $\mathbb{M}_{d}.$
 Our functional  regression modelling framework goes beyond the structural assumptions present, for instance,  in   \cite{ALVRuiz19}; \cite{Cuevas02};  \cite{FebreroBande15}; \cite{Bosq2000};    \cite{Cardot07}; \cite{Crambes13};  \cite{Ma99}; \cite{RuizMedina11}; \cite{RuizMedina12a};  \cite{RuizMedina12b},  and  references therein.
It also supposes an extension to the nonlinear,  non Euclidean,  and Long--Range Dependence (LRD) settings of the FANOVA analysis under weak--dependent errors
 achieved in \cite{RuizMedina16}, and later, in  \cite{Alvarez17}  including the case of circular domains (see also \cite{RuizMD18} in the context of  multiple functional  regression).

A purely spatial statistical analysis of atmospheric pressure at high cloud bottom, and downward solar radiation flux has been achieved by \cite{Alegria21}, in the framework of  spherical  isotropic  random fields from a nonparametric bayesian perspective. As pointed out in \cite{Alegria21},  their purely spatial bivariate   analysis of    atmospheric pressure, and downward solar radiation flux data involves some information loss on the  temporal patterns in these data due to the averaging performed over time.
The present paper addresses this problem incorporating the  time dynamics  into the covariates defining  the  latent parametric  functional linear  regression  model, and  a  regression manifold--supported functional error term strong--correlated in time. Thus, prediction of  downward solar radiation flux  earth map sequences is performed, conditioning to the    atmospheric pressure at high
cloud bottom, in the context of  functional nonlinear regression.

Our  approach   also  supposes a substantial contribution to the field of spatiotemporal regression from a functional perspective. Note that
 most of the functional regression tools have been developed under the assumption of independent or weak--dependent functional data
(see, e.g.,  \cite{Caponera21};  \cite{CaponeraMarinucci};  \cite{Panaretos13}  in the functional time series framework). GLS estimation of the latent functional  regression parameter is implemented in our approach
to incorporate the  strong--dependence structure of the error term in the estimation procedure.
 The computational cost and complexity of the  implementation can be substantially reduced, under the invariance of the covariance  kernels with respect to the group of isometries of the connected and compact two--point homogeneous space $\mathbb{M}_{d}$  with topological dimension   $d.$  Invariant kernels admit a diagonal series expansion   in terms of the eigenfunctions of the Laplace--Beltrami operator.
We adopt the approach introduced in  \cite{RuizMedina2022}  in our LRD spectral  analysis of the regression functional error term, when its spatiotemporal dependence structure is unknown.
Specifically, minimum contrast estimation in the spectral domain is implemented  (see  also \cite{Ovalle23}  in the framework of  multifractional integration of spherical functional time series).
   Note that the LRD analysis of functional time series in the nonstationary case has mainly been developed in terms of the eigendecomposition of the long run covariance function (see, e.g., \cite{LiRobinsonShang19}).

The  outline of the paper is as follows. Section \ref{secprel}  presents some preliminary elements on the spectral analysis of LRD manifold--suported  functional time series. Section \ref{secFANOVA} introduces the regression model and estimation methodology. In Section \ref{secsimulation}, a simulation study is undertaken to illustrate the performance of the theoretical and plug--in nonlinear regression predictors, under an infinite--dimensional log--Gaussian scenario.  See also Section 1 in the  Supplementary Material  where  GLS estimation under strong--dependent  data is illustrated in the context of  spherical functional linear regression models.  The asymptotic and finite functional sample size properties of the corresponding theoretical and plug--in regression predictors are displayed as well. Particularly,  the effect of the pure point spectral properties of the LRD operator   of the  regression error, affecting  accuracy and variability of the GLS plug--in parameter estimator,  is analyzed.
Section \ref{secApplic} implements the proposed functional theoretical and plug--in nonlinear regression predictors from the generated synthetic data of downward solar radiation and  atmospheric pressure at high cloud bottom. The performance of the functional regression predictor is evaluated in terms of $5$--fold random cross validation.
Some final comments and open research lines are discussed in Section \ref{secconclusions}.

\section{Preliminaries}

\label{secprel}

 Let   $X=\{ X(\mathbf{x},t),\ \mathbf{x}\in \mathbb{M}_{d},\ t\in \mathbb{T}\}$  be  a zero--mean, stationary in time, and isotropic  in space mean--square continuous Gaussian, or elliptically contoured, spatiotemporal random field  on the basic probability space $(\Omega,\mathcal{A},P),$  with covariance function
$C(d_{\mathbb{M}_{d}}(\mathbf{x},\mathbf{y}),t-s)= E\left[X(\mathbf{x},t) X(\mathbf{y},s)\right],$ for $\mathbf{x},\mathbf{y}\in \mathbb{M}_{d},$ and $t,s\in \mathbb{T}.$ Here, $\mathbb{T}$ denotes the temporal domain, which can be $\mathbb{Z}$ or $\mathbb{R}.$ Under the conditions of   Theorem  4   in   \cite{MaMalyarenko}, the covariance function $C(d_{\mathbb{M}_{d}}(\mathbf{x},\mathbf{y}),t-s)$ admits the following diagonal series expansion:
\begin{eqnarray}&&C(d_{\mathbb{M}_{d}}(\mathbf{x},\mathbf{y}),t-s)=
  \sum_{n\in \mathbb{N}_{0}}
B_{n}(t-s)\sum_{j=1}^{\delta (n,d)}S_{n,j}^{d}(\mathbf{x})S_{n,j}^{d}(\mathbf{y})\nonumber\\
&&=
\sum_{n\in \mathbb{N}_{0}} \frac{\delta (n,d)}{\omega_{d}}  B_{n}(t-s)R_{n}^{(\alpha, \beta )}\left(\cos\left(d_{\mathbb{M}_{d}}(\mathbf{x},\mathbf{y})\right)\right),\ \mathbf{x},\mathbf{y}\in \mathbb{M}_{d}, \ t,s \in \mathbb{T},\nonumber\\ \label{klexpc2}
\end{eqnarray}
\noindent where $\delta (n,d)$ denotes the dimension of the $n$th eigenspace $\mathcal{H}_{n}$ of the Laplace Beltrami operator,
$\omega_{d}=\int_{\mathbb{M}_{d}}d\nu(\mathbf{x}),$
and
  $\{ S_{n,j}^{d},\ j=1,\dots,\delta(n,d), \ n\in \mathbb{N}_{0}\}$   is the  system of  orthonormal  eigenfunctions
of the Laplace Beltrami operator $\Delta_{d}$ on $L^{2}(\mathbb{M}_{d},d\nu ,\mathbb{R}).$
 Furthermore, in the last identity in (\ref{klexpc2}), we have applied addition formula  in the context of connected and compact two--point homogeneous spaces
 (see Theorem 3.2. in \cite{Gine1975} and p. 455 in \cite{Andrews99}), where
$R_{n}^{\alpha,\beta}\left(\cos(d_{\mathbb{M}_{d}}(\mathbf{x},\mathbf{y}))\right)
=\frac{P_{n}^{\alpha,\beta}\left(\cos(d_{\mathbb{M}_{d}}(\mathbf{x},\mathbf{y}))\right)}{P_{n}^{\alpha,\beta}\left(1\right)},$ with $P_{n}^{\alpha,\beta}$ denoting the Jacobi polynomial of degree $n \in\mathbb{N}_{0},$ with parameters $\alpha$ and $\beta$ (see, e.g., \cite{MaMalyarenko}, and \cite{Cartan1927}, for more details on Lie Algebra based  approach).

Consider the restriction   $X_{T}=\{ X(\mathbf{x},t),\ \mathbf{x}\in \mathbb{M}_{d},\ t\in [0,T]\}$ of $X$  satisfying (\ref{klexpc2})
 to the interval $[0,T].$  The following lemma provides the orthogonal expansion of $X_{T}=\{ X(\mathbf{x},t),\ \mathbf{x}\in \mathbb{M}_{d},\ t\in [0,T]\}$ in terms of the eigenfunctions of the Laplace Beltrami operator (see Theorem 1 in the Supplementary Material in \cite{Ovalle23}).
\begin{lemma}
\label{lemmat5ma}
Let $X_{T}=\{ X(\mathbf{x},t),\ \mathbf{x}\in \mathbb{M}_{d},\ t\in [0,T]\}$  be the restriction of $X$  to the interval $[0,T],$ satisfying (\ref{klexpc2}),
and
\begin{equation}\sum_{n\in \mathbb{N}_{0}}B_{n}(0)\delta (n,d)<\infty.\label{eqtrace}
\end{equation}
\noindent Then,  $X_{T}$ admits the following orthogonal expansion:

 \begin{eqnarray}
&&X_{T}(\mathbf{x},t)\underset{\mathcal{L}^{2}_{\widetilde{H}}(\Omega,\mathcal{A},P)}{=}\sum_{n\in \mathbb{N}_{0}} \sum_{j=1}^{\delta (n,d)}V_{n,j}(t)S_{n,j}^{d}(\mathbf{x}),\quad \mathbf{x}\in \mathbb{M}_{d},\ t\in [0,T],
\label{klexp}
\end{eqnarray}

\noindent where $\mathcal{L}^{2}_{\widetilde{H}}(\Omega,\mathcal{A},P)=L^{2}(\Omega\times\mathbb{M}_{d}\times[0,T],P(d\omega )\otimes d\nu\otimes dt),$ with
  $\widetilde{H}=L^{2}(\mathbb{M}_{d}\times[0,T],
 d\nu\otimes dt).$ Here,
 $\{V_{n,j}(t),\ t\in  [0,T], \ j=1,\dots, \delta(n,d), \  n\in \mathbb{N}_{0}\}$ is a sequence of centered  uncorrelated  random processes on $[0,T]$ satisfying
 \begin{equation}
 V_{n,j}(t)=\int_{\mathbb{M}_{d}}X_{T}(\mathbf{y},t)S_{n,j}^{d}(\mathbf{y})d\nu(\mathbf{y}), \  j=1,\dots, \delta(n,d),\ n\in \mathbb{N}_{0},
 \label{fcA}
 \end{equation}
 \noindent and
 \begin{equation}
  E[V_{n,j}(t)V_{m,l}(s)]=\delta_{n,m}\delta_{j,l}B_{n}(t-s)
  \label{eqincorr}
  \end{equation}
\end{lemma}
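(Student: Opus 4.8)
The plan is to realize (\ref{klexp}) as a Karhunen--Lo\`eve--type expansion in the spatial variable, carried by the complete orthonormal system $\{S_{n,j}^{d}\}$ of Laplace--Beltrami eigenfunctions, with the temporal dependence absorbed into the random coefficient processes $V_{n,j}.$ First I would take (\ref{fcA}) as the \emph{definition} of these coefficients and check that each $V_{n,j}(t)$ is a well--defined mean--square (Bochner) integral: since $X$ is mean--square continuous and $\mathbb{M}_{d}$ is compact with finite total measure $\omega_{d},$ the integrand $\mathbf{y}\mapsto X_{T}(\mathbf{y},t)S_{n,j}^{d}(\mathbf{y})$ is mean--square continuous and bounded in $\mathcal{L}^{2}(\Omega),$ so the integral exists in $\mathcal{L}^{2}(\Omega).$ Centering is then immediate by interchanging expectation with integration, $E[V_{n,j}(t)]=\int_{\mathbb{M}_{d}}E[X_{T}(\mathbf{y},t)]S_{n,j}^{d}(\mathbf{y})\,d\nu(\mathbf{y})=0,$ since $X$ is zero--mean.

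Next I would establish the covariance identity (\ref{eqincorr}). Writing $V_{n,j}(t)V_{m,l}(s)$ as a double mean--square integral and interchanging expectation with the two spatial integrations yields
$$E[V_{n,j}(t)V_{m,l}(s)]=\int_{\mathbb{M}_{d}}\int_{\mathbb{M}_{d}}C(d_{\mathbb{M}_{d}}(\mathbf{y},\mathbf{z}),t-s)S_{n,j}^{d}(\mathbf{y})S_{m,l}^{d}(\mathbf{z})\,d\nu(\mathbf{y})\,d\nu(\mathbf{z}).$$
Substituting the diagonal expansion (\ref{klexpc2}) of $C$ and invoking the orthonormality relations $\int_{\mathbb{M}_{d}}S_{k,i}^{d}S_{n,j}^{d}\,d\nu=\delta_{k,n}\delta_{i,j}$ collapses both spatial integrals, leaving exactly $\delta_{n,m}\delta_{j,l}B_{n}(t-s).$ This simultaneously proves the mutual uncorrelatedness across distinct index pairs and the prescribed autocovariance; in particular $E[|V_{n,j}(t)|^{2}]=B_{n}(0).$

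It remains to prove convergence of the series in $\mathcal{L}^{2}_{\widetilde{H}}(\Omega,\mathcal{A},P),$ where the trace condition (\ref{eqtrace}) is the decisive hypothesis. Using $R_{n}^{\alpha,\beta}(1)=1,$ the total energy evaluates to $E\int_{0}^{T}\!\int_{\mathbb{M}_{d}}|X_{T}(\mathbf{x},t)|^{2}\,d\nu(\mathbf{x})\,dt=T\sum_{n\in\mathbb{N}_{0}}B_{n}(0)\delta(n,d)<\infty,$ so that $X_{T}$ indeed lies in $\mathcal{L}^{2}_{\widetilde{H}}.$ For the partial sum $X_{T}^{(N)}(\mathbf{x},t)=\sum_{n=0}^{N}\sum_{j=1}^{\delta(n,d)}V_{n,j}(t)S_{n,j}^{d}(\mathbf{x}),$ the spatial orthonormality together with $E[|V_{n,j}(t)|^{2}]=B_{n}(0)$ gives $E\|X_{T}^{(N)}\|^{2}=T\sum_{n=0}^{N}B_{n}(0)\delta(n,d),$ while the cross term $E\langle X_{T},X_{T}^{(N)}\rangle$ reduces, via (\ref{fcA}), to the same quantity. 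Hence the remainder satisfies $\|X_{T}-X_{T}^{(N)}\|^{2}_{\mathcal{L}^{2}_{\widetilde{H}}}=T\sum_{n>N}B_{n}(0)\delta(n,d),$ which tends to zero by (\ref{eqtrace}), establishing (\ref{klexp}).

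The orthonormal--basis bookkeeping is routine; the step requiring genuine care is the repeated interchange of expectation (mean--square integration) with Riemannian integration over $\mathbb{M}_{d},$ i.e. the stochastic Fubini arguments underlying both the centering and the covariance computation. I would justify these by invoking mean--square continuity of $X$ and compactness of $\mathbb{M}_{d},$ which secure $\mathcal{L}^{2}(\Omega\times\mathbb{M}_{d})$--integrability of the relevant integrands and hence the hypotheses of a stochastic Fubini theorem, rather than manipulating the integrals formally.
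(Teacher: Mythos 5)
Your proposal is correct and follows essentially the same route the paper relies on: the paper states Lemma \ref{lemmat5ma} without giving a proof, deferring to Theorem 1 in the Supplementary Material of \cite{Ovalle23}, and the argument there is exactly this Karhunen--Lo\`eve--type one---take (\ref{fcA}) as the definition of the coefficient processes, obtain (\ref{eqincorr}) from the diagonal expansion (\ref{klexpc2}) together with orthonormality of the $S_{n,j}^{d},$ and use the trace condition (\ref{eqtrace}) to get Parseval--type convergence of the partial sums in $\mathcal{L}^{2}_{\widetilde{H}}(\Omega,\mathcal{A},P).$ Your explicit justification of the stochastic Fubini interchanges via mean--square continuity and compactness of $\mathbb{M}_{d}$ is a sound elaboration of that same argument, not a deviation from it.
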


Assume  that $\mathbb{T}=\mathbb{Z},$ and that the map $$\widetilde{X}_{t}:(\Omega ,\mathcal{A})\longrightarrow \left(L^{2}(\mathbb{M}_{d},d\nu ,\mathbb{R}),\mathcal{B}(L^{2}(\mathbb{M}_{d},d\nu ,\mathbb{R}))\right)$$ \noindent  is   measurable, with  $\widetilde{X}_{t}(\mathbf{x}):=X(\mathbf{x},t)$ for every $t\in \mathbb{T}$ and $\mathbf{x}\in \mathbb{M}_{d}.$  Here,  $\mathcal{B}(L^{2}(\mathbb{M}_{d},d\nu ,\mathbb{R})) $ denotes the Borel  $\sigma$--algebra of $L^{2}(\mathbb{M}_{d},d\nu, \mathbb{R})$
(i.e., the smallest $\sigma$--algebra containing  the collection of all open subsets of  $L^{2}(\mathbb{M}_{d},d\nu, \mathbb{R})$).
By previous assumptions on $X,$  $\left\{\widetilde{X}_{t}, \ t\in \mathbb{Z}\right\}$   then defines a manifold--supported  weak--sense stationary functional time series. In particular, $E\left[\widetilde{X}_{t}\right]=0,$ and $\sigma_{\widetilde{X}}^{2}=E\left[\|\widetilde{X}_{t}\|^{2}_{L^{2}(\mathbb{M}_{d},d\nu,\mathbb{R})}\right]=
E\left[\|\widetilde{X}_{0}\|^{2}_{L^{2}(\mathbb{M}_{d},d\nu,\mathbb{R})}\right]=\|R_{0}\|_{L^{1}(L^{2}(\mathbb{M}_{d},d\nu,\mathbb{R}))},$ for every $t\in \mathbb{Z}.$ By $L^{1}(L^{2}(\mathbb{M}_{d},d\nu,\mathbb{R}))$ we denote  the space of trace or nuclear operators on $L^{2}(\mathbb{M}_{d},d\nu,\mathbb{R}).$
The second--order structure of $\left\{\widetilde{X}_{t}, \ t\in \mathbb{Z}\right\}$ is  characterized by the family of covariance operators  $\left\{\mathcal{R}_{t}, \ t\in \mathbb{Z}\right\}$  given by, for all $h,g\in L^{2}(\mathbb{M}_{d},d\nu,\mathbb{R}),$
\begin{eqnarray}
&&
\mathcal{R}_{t}(g)(h)= E[\widetilde{X}_{s+t}(h)\widetilde{X}_{s}(g)]= E\left[\left\langle
\widetilde{X}_{s+t},h\right\rangle_{L^{2}(\mathbb{M}_{d},d\nu,\mathbb{R})}\left\langle \widetilde{X}_{s},g\right\rangle_{L^{2}(\mathbb{M}_{d},d\nu,\mathbb{R})}
\right]\nonumber\\
&&
\mathcal{R}_{t}:=E[\widetilde{X}_{s+t}\otimes \widetilde{X}_{s}]=\quad  
\forall t,s\in \mathbb{Z}.
\label{covstatfunct2}
\end{eqnarray}

 Under (\ref{klexpc2}), the family of  covariance operators $\left\{\mathcal{R}_{t},\ t\in \mathbb{Z}\right\}$  satisfies
\begin{equation}\mathcal{R}_{t}=E\left[\widetilde{X}_{t}\otimes \widetilde{X}_{0}\right]= \sum_{n\in \mathbb{N}_{0}}
B_{n}(t)\sum_{j=1}^{\delta (n,d)}S_{n,j}^{d}\otimes S_{n,j}^{d},\ t\in \mathbb{Z}.
\label{sdo3}
\end{equation}

The  spectral density operator family     $\left\{\mathcal{F}_{\omega},\ \omega \in [-\pi, \pi]\right\}$ is   defined from the functional Fourier transform of the elements of the covariance operator family.

  \begin{equation}
  \mathcal{F}_{\omega}
  \underset{\mathcal{S}(L^{2}(\mathbb{M}_{d},d\nu,\mathbb{C}))}{=}
  \frac{1}{2\pi} \sum_{t\in \mathbb{Z}}\exp\left(-i\omega t\right)
\mathcal{R}_{t},\ \omega \in [-\pi,\pi]\backslash \{0\},
\label{sdo2}
\end{equation}
\noindent where $\underset{\mathcal{S}(L^{2}(\mathbb{M}_{d},d\nu,\mathbb{C}))}{=}$ denotes the identity in the norm of the space of Hilbert--Schmidt operators. From equations (\ref{sdo3}) and (\ref{sdo2}),
\begin{eqnarray} &&
  \mathcal{F}_{\omega}
  \underset{\mathcal{S}(L^{2}(\mathbb{M}_{d},d\nu,\mathbb{C}))}{=}\sum_{n\in \mathbb{N}_{0}}\left[\sum_{t\in \mathbb{Z}} \exp\left(-i\omega t\right)B_{n}(t)\right]\sum_{j=1}^{\delta (n,d)}S_{n,j}^{d}\otimes S_{n,j}^{d}\nonumber\\
  &&\underset{\mathcal{S}(L^{2}(\mathbb{M}_{d},d\nu,\mathbb{C}))}{=}\sum_{n\in \mathbb{N}_{0}}f_{n}(\omega )\sum_{j=1}^{\delta (n,d)}   S_{n,j}^{d}\otimes S_{n,j}^{d},\nonumber
\label{othexp}
\end{eqnarray}
\noindent with
\begin{equation}B_{n}(t)= \int_{[-\pi,\pi]} \exp\left(i\omega t  \right) f_{n}(\omega )d\omega ,\quad  \forall t\in \mathbb{Z}.\label{eq1}
\end{equation}

 The functional Discrete Fourier Transform fDFT $\widetilde{X}^{(T)}_{\omega }(\cdot)$ of the map  data sequence  is defined as
\begin{equation}\widetilde{X}^{(T)}_{\omega }(\mathbf{x})
=\frac{1}{\sqrt{2\pi
T}}\sum_{t=1}^{T}\widetilde{X}_{t}(\mathbf{x})\exp\left(-i\omega t\right), \quad \mathbf{x}\in \mathbb{M}_{d},\
\omega\in [-\pi ,\pi].\label{fDFT}\end{equation}
\noindent Note that $$E\left[\|\widetilde{X}_{\omega}^{(T)}\|_{L^{2}(\mathbb{M}_{d},d\nu,\mathbb{C})}\right]\leq \frac{1}{\sqrt{2\pi
T}}\sum_{t=1}^{T}E\|\widetilde{X}_{t}(\cdot )\|_{L^{2}(\mathbb{M}_{d},d\nu,\mathbb{R})}<\infty.$$
\noindent Then,
 $\widetilde{X}^{(T)}_{\omega }(\cdot)$ is a random  element in the space   $L^{2}(\mathbb{M}_{d},d\nu,\mathbb{C}),$ which denotes  the complex version of the Hilbert space  $L^{2}(\mathbb{M}_{d},d\nu,\mathbb{R}).$

As usually,  the periodogram operator is defined  from the fDFT by $p_{\omega }^{(T)}=\widetilde{X}_{\omega}^{(T)}\otimes
\overline{\widetilde{X}_{\omega}^{(T)}}=\widetilde{X}_{\omega}^{(T)}
 \otimes \widetilde{X}_{-\omega}^{(T)}.$   Its mean is then computed as
  \begin{eqnarray}E[p_{\omega }^{(T)}]&=&E[\widetilde{X}_{\omega}^{(T)}
 \otimes \widetilde{X}_{-\omega}^{(T)}]=  \frac{1}{2\pi }\sum_{u=-(T-1)}^{T-1}\exp\left(-i\omega u\right)
\frac{(T-|u|)}{T}\mathcal{R}_{u}\nonumber \\&=&\int_{-\pi}^{\pi} F_{T}(\omega - \xi)
\mathcal{F}_{\xi} d\xi,\quad T\geq 2,\nonumber \end{eqnarray}
\noindent in terms of the F\'ejer kernel $F_{T}(\omega )=\frac{1}{T}\sum_{t=1}^{T}
\sum_{s=1}^{T}\exp\left(-i(t-s)\omega \right).$

\section{Multiple  functional regression in manifolds}
\label{secFANOVA}

Let us consider the functional regression model
\begin{eqnarray}
&&\mathbf{Y}(\mathbf{y})=\mathbf{H}\left(\mathbf{X}(\boldsymbol{\beta})\right)(\mathbf{y})+\boldsymbol{\varepsilon}(\mathbf{y}),\quad
\mathbf{y}\in \mathbb{M}_{d},
\label{ecmodeloch5}
\end{eqnarray}
\noindent given by
\begin{eqnarray}
&&
\left[\begin{array}{c} Y_{1}(\mathbf{y})\\
 \vdots\\
 Y_{T}(\mathbf{y})\\
 \end{array}\right]=H\left(\left[\begin{array}{c}
 \sum_{j=1}^{p}X_{1,j}\beta_{j}(\mathbf{y})\\
 \vdots\\
 \sum_{j=1}^{p}X_{T,j}\beta_{j}(\mathbf{y})\\
 \end{array}\right]\right)+\left[\begin{array}{c} \varepsilon_{1}(\mathbf{y})\\
 \vdots \\
 \varepsilon_{T}(\mathbf{y})\\
 \end{array}
 \right]\nonumber\\
 &&=H\left(\left[\begin{array}{c}
 g_{1}(\underline{X}_{1},\beta (\mathbf{y}))\\
 \vdots\\
 g_{T}(\underline{X}_{T},\beta (\mathbf{y}))
 \end{array}\right]\right)+\left[\begin{array}{c} \varepsilon_{1}(\mathbf{y})\\
 \vdots \\
 \varepsilon_{T}(\mathbf{y})\\
 \end{array}
 \right],\ \mathbf{y}\in \mathbb{M}_{d},\nonumber
\end{eqnarray}
\noindent where
\begin{eqnarray} &&\mathbf{X}=(X_{t,j})_{t=1,\dots,T; j=1,\dots, p}; \quad  \mathbf{Y}(\mathbf{y})=[Y_{1}(\mathbf{y}),Y_{2}(\mathbf{y}),\dots,Y_{T}(\mathbf{y})]^{T},\ \mathbf{y}\in \mathbb{M}_{d}\nonumber\\
&&\underline{X}_{t}=[X_{t,1},\dots, X_{t,p}]^{T},\quad   t=1,\dots,T,\nonumber\\
&&\boldsymbol{\beta}(\mathbf{y})=[\beta_{1}(\mathbf{y}),\dots, \beta_{p}(\mathbf{y}) ]^{T};\quad
 \boldsymbol{\varepsilon}(\mathbf{y})=[\varepsilon_{1}(\mathbf{y}),\varepsilon_{2}(\mathbf{y}),\dots,\varepsilon_{T}(\mathbf{y})]^{T},\
 \mathbf{y}\in \mathbb{M}_{d}.\nonumber
 \end{eqnarray}
 \noindent  In equation (\ref{ecmodeloch5}), the regression
 parameters   $\beta_{j}\in L^{2}(\mathbb{M}_{d},d\nu ,\mathbb{R}),$ $j=1,\dots,p,$ respectively  provide the spatial weighting of the time--varying  covariates \linebreak $(X_{t,j}\in \mathbb{R},\ t=1,\dots,T),$ $j=1,\dots,p.$ The measurable
 mapping \linebreak
   $\mathbf{H}:[L^{2}(\mathbb{M}_{d},d\nu ,\mathbb{R})]^{T}\to [L^{2}(\mathbb{M}_{d},d\nu ,\mathbb{R})]^{T}$  is assumed to be an   isomorphic (bijective and bicontinuous). This mapping   combines geographical and temporal information affecting the functional response $\mathbf{Y}$ in a nonlinear manner.
   In particular, in our context,  it is defined from the physical law governing the dynamical relationship between $\mathbf{Y}(\cdot)$ and $\mathbf{X}.$ Hence, $\mathbf{H} $  is assumed to be known. Here, $[L^{2}(\mathbb{M}_{d},d\nu ,\mathbb{R})]^{T}$ denotes the separable Hilbert space
of $T$--dimensional vector functions with the inner product \begin{equation}\left\langle\mathbf{f},\mathbf{g}\right\rangle_{[L^{2}(\mathbb{M}_{d},d\nu,\mathbb{R})]^{T}}=\sum_{l=1}^{T}\left\langle f_{l},g_{l}\right\rangle_{L^{2}(\mathbb{M}_{d},d\nu,\mathbb{R})},
\label{eqfss}
\end{equation} \noindent  for every $\mathbf{f}=(f_{1},\dots,f_{T})^{T},$ $\mathbf{g}=(g_{1},\dots,g_{T})^{T}\in [L^{2}(\mathbb{M}_{d},d\nu,\mathbb{R})]^{T}.$

 \begin{remark}
In some applied fields,    $(X_{t,j},\ t=1,\dots, T),$ $j=1,\dots,p,$  can represent the observed values of the time--varying   Fourier coefficients of the spatiotemporal covariates, with respect to the spatial basis we want to fit in an optimal least--squares sense, to characterize the functional support of the link operator $\mathbf{H}.$
\end{remark}
 We assume that the error term
$ \boldsymbol{\varepsilon}(\mathbf{y})$ in (\ref {ecmodeloch5}) is independent of $(\underline{X}_{1},\dots ,\underline{X}_{T}),$ for every $\mathbf{y}\in \mathbb{M}_{d}.$
Process  $\{\varepsilon_{t},\ t\in \mathbb{Z}\}$
defines an   LRD  stationary zero--mean functional time series, with values in the space  $L^{2}(\mathbb{M}_{d},d\nu,\mathbb{R}),$ having   invariant covariance operators   with  respect to the group of isometries of $\mathbb{M}_{d},$
satisfying the conditions assumed in Theorem  4   in \cite{MaMalyarenko} and in Lemma \ref{lemmat5ma} in Section \ref{secprel}.

\begin{remark}
 Note that in equation (\ref{ecmodeloch5}), for each $\mathbf{y}\in \mathbb{M}_{d},$
\begin{eqnarray}
 &&\mathbf{Y}(\mathbf{y})=\left[\begin{array}{c} Y_{1}(\mathbf{y})\\
 \vdots\\
 Y_{T}(\mathbf{y})\\
 \end{array}\right]=E\left[\mathbf{Y}(\mathbf{y})/(\underline{X}_{1},\dots ,\underline{X}_{T})\right]
+\left[\begin{array}{c} \varepsilon_{1}(\mathbf{y})\\
 \vdots \\
 \varepsilon_{T}(\mathbf{y})\\
 \end{array}
 \right].\nonumber
 \end{eqnarray}
 \noindent Hence,
\begin{eqnarray}&&  E\left[\mathbf{H}^{-1}(\mathbf{Y})(\mathbf{y})/(\underline{X}_{1},\dots ,\underline{X}_{T})\right]=\mathbf{X}(\boldsymbol{\beta})(\mathbf{y})=
\left[\begin{array}{c}
 g_{1}(\underline{X}_{1},\beta (\mathbf{y}))\\
 \vdots\\
 g_{T}(\underline{X}_{T},\beta (\mathbf{y}))
 \end{array}\right],\ \mathbf{y}\in \mathbb{M}_{d}.\nonumber\\ \label{eqob2}\end{eqnarray}
 \end{remark}
  Assume that  the Fr\'echet Jacobian of $\mathbf{H}^{-1}$ is almost surely bounded. Thus,
    the absolute difference of the probability  distributions of $\mathbf{Y}(\mathbf{y})$ and $\widetilde{\mathbf{Y}}(\mathbf{y})=\mathbf{H}^{-1}(\mathbf{Y})(\mathbf{y})$
  is uniformly bounded  (only depending on the spatial location  $\mathbf{y}\in \mathbb{M}_{d}$). One  can then consider the following regression model:
 \begin{equation}
\widetilde{\mathbf{Y}}(\mathbf{y})= H^{-1}(\mathbf{Y})(\mathbf{y})=\mathbf{X}(\boldsymbol{\beta})(\mathbf{y})+\boldsymbol{\varepsilon}(\mathbf{y}),\quad \mathbf{y}\in \mathbb{M}_{d}.
\label{ellr}
\end{equation}
We refer  to model  (\ref{ellr}) as the latent  parametric linear functional regression model.

The GLS estimator of the  functional  parameter vector $\boldsymbol{\beta}$ is computed in the next section, incorporating the  spatiotemporal dependence structure of the functional  error term  $\boldsymbol{\varepsilon}(\cdot),$ which is
 given by the
 matrix covariance operator
\[
\begin{split}
\mathbf{R}_{\boldsymbol{\varepsilon}\boldsymbol{\varepsilon}}&=
E\left[\boldsymbol{\varepsilon}(\cdot)\boldsymbol{\varepsilon}^{T}(\cdot)\right]\\
 &=\begin{bmatrix}
E\left[\varepsilon_{1}(\cdot)\otimes\varepsilon_{1}(\cdot)\right] & \cdots & E\left[\varepsilon_{1}(\cdot)\otimes\varepsilon_{T}(\cdot)\right]\\
E\left[\varepsilon_{2}(\cdot)\otimes \varepsilon_{1}(\cdot)\right]& \cdots & E\left[\varepsilon_{2}(\cdot)\otimes\varepsilon_{T}(\cdot)\right]\\
\vdots & \vdots & \vdots \\
E\left[\varepsilon_{T}(\cdot)\otimes \varepsilon_{1}(\cdot)\right]  & \cdots & E\left[\varepsilon_{T}(\cdot)\otimes\varepsilon_{T}(\cdot)\right]
\end{bmatrix}\\
&=\begin{bmatrix}
\mathcal{R}_{0} & \mathcal{R}_{1} & \cdots & \mathcal{R}_{T-1}\\
\mathcal{R}_{1} & \mathcal{R}_{0}& \cdots & \mathcal{R}_{T-2}\\
\vdots & \vdots & \vdots \\
\mathcal{R}_{T-1}& \mathcal{R}_{T-2}& \cdots & \mathcal{R}_{0}
\end{bmatrix},
\end{split}
\]
  \noindent  where $\mathcal{R}_{T-t}
  =E\left[\varepsilon_{t}\otimes \varepsilon_{T}\right],$ $t=1,\dots,T.$
Note that the functional entries of $\mathbf{R}_{\boldsymbol{\varepsilon}\boldsymbol\varepsilon}$  admit the diagonal series expansion  introduced in equation (\ref{klexpc2}).
In the subsequent development we will consider the following  orthogonal expansion  of $\mathbb{M}_{d}$--supported    functions
\begin{eqnarray}
\label{ecseriesbeta}
\beta_{h}(\mathbf{y})&=&\sum_{n\in\mathbb{N}_{0}}\sum_{k=1}^{\delta(n,d)}\beta_{n,k}^{(h)}S_{n,k}^{d}(\mathbf{y})\nonumber\\
&=&\sum_{n\in\mathbb{N}_{0}}\sum_{k=1}^{\delta(n,d)}\left\langle  \beta_{h},  S_{n,k}^{d}\right\rangle_{L^{2}(\mathbb{M}_{d},d\nu ,\mathbb{R})}S_{n,k}^{d}(\mathbf{y}),\quad \mathbf{y}\in \mathbb{M}_{d},\ h=1,2,\dots,p.
\end{eqnarray}

Under the conditions assumed on the error term in equation (\ref{ecmodeloch5}) (see Lemma \ref{lemmat5ma}),
 from equations (\ref{ellr}) and (\ref{ecseriesbeta}),  the following vector series expansion holds for the response $\widetilde{\mathbf{Y}}(\mathbf{y})$   in the space $\mathcal{L}^{2}_{[L^{2}(\mathbb{M}_{d},d\nu ,\mathbb{R})]^{T}}(\Omega,\mathcal{A},P),$

\begin{eqnarray}
\label{ecexpansionres}\widetilde{\mathbf{Y}}(\mathbf{y})
&=&\sum_{n\in\mathbb{N}_{0}}\sum_{k=1}^{\delta(n,d)}\left\langle \widetilde{\mathbf{Y}}, S_{n,k}^{d}\right\rangle_{L^{2}(\mathbb{M}_{d},d\nu,\mathbb{R})} S_{n,k}^{d}(\mathbf{y}),\quad \mathbf{y}\in \mathbb{M}_{d},
\end{eqnarray}
\noindent where

\begin{eqnarray}&&\hspace*{-0.7cm}\left( \left\langle \widetilde{\mathbf{Y}}, S_{n,k}^{d}\right\rangle_{L^{2}(\mathbb{M}_{d},d\nu,\mathbb{R})}S_{n,k}^{d}(\mathbf{y})\right)^{T}
\nonumber\\
&&\hspace*{2cm}=
\left(\int_{\mathbb{M}_{d}}\widetilde{Y}_{1}(\mathbf{y})S_{n,k}^{d}(\mathbf{y})d\nu(\mathbf{y}),\dots,
\int_{\mathbb{M}_{d}}\widetilde{Y}_{T}(\mathbf{y})S_{n,k}^{d}(\mathbf{y})d\nu(\mathbf{y}) \right)_{1\times T}
\nonumber\\ &&\hspace*{5cm}\times
\mbox{diag}\left( S_{n,k}^{d}(\mathbf{y}),\dots,S_{n,k}^{d}(\mathbf{y})\right)_{T\times T}\nonumber\\
&&\hspace*{-0.5cm}
=\left(\left\langle \widetilde{Y}_{1},S_{n,k}^{d}\right\rangle_{L^{2}(\mathbb{M}_{d},d\nu,\mathbb{R})},\dots, \left\langle \widetilde{Y}_{T},S_{n,k}^{d}\right\rangle_{L^{2}(\mathbb{M}_{d},d\nu,\mathbb{R})}\right)_{1\times T}\mbox{diag}\left( S_{n,k}^{d}(\mathbf{y}),\dots,S_{n,k}^{d}(\mathbf{y})\right)_{T\times T},\nonumber\\
\label{inversion2}
\end{eqnarray}
\noindent with $\mbox{diag}\left( S_{n,k}^{d}(\mathbf{y}),\dots,S_{n,k}^{d}(\mathbf{y})\right)_{T\times T}$ being a diagonal matrix with constant
entries equal to $S_{n,k}^{d}(\mathbf{y}),$ for each  $\mathbf{y}\in \mathbb{M}_{d}.$ Here, from equations (\ref{ellr}) and  (\ref{ecseriesbeta}), for $k=1,\dots, \delta (n,d),$ and $n\in \mathbb{N}_{0},$
\begin{eqnarray}&&\widetilde{\mathbf{Y}}_{n,k}=\left[\begin{array}{c}\left\langle\widetilde{Y}_{1},S_{n,k}^{d}\right\rangle_{L^{2}(\mathbb{M}_{d},d\nu,\mathbb{R})}\\
\vdots\\
  \left\langle\widetilde{Y}_{T},S_{n,k}^{d}\right\rangle_{L^{2}(\mathbb{M}_{d},d\nu,\mathbb{R})}\\
  \end{array}\right]=\left[\begin{array}{c}\widetilde{Y}_{n,k}(1)\\
\vdots\\
  \widetilde{Y}_{n,k}(T)\\
  \end{array}\right]=\left[\begin{array}{c}
 \sum_{j=1}^{p}X_{1,j}\beta_{n,k}^{(j)}+\varepsilon_{n,k}(1)\\
 \vdots\\
 \sum_{j=1}^{p}X_{T,j}\beta_{n,k}^{(j)}+\varepsilon_{n,k}(T)\\
 \end{array}\right]\nonumber\\
 \label{fclr}
\end{eqnarray}
\noindent where
\begin{eqnarray}&&\boldsymbol{\beta}_{n,k}=\left(\beta_{n,k}^{(1)},\dots,\beta_{n,k}^{(p)}\right)^{T}\nonumber\\
&&=\left(\left\langle \beta_{1},S_{n,k}^{d}\right\rangle_{L^{2}(\mathbb{M}_{d},d\nu,\mathbb{R})},\dots,\left\langle \beta_{p},S_{n,k}^{d}\right\rangle_{L^{2}(\mathbb{M}_{d},d\nu,\mathbb{R})}\right)^{T}\nonumber\\
&&\boldsymbol{\varepsilon}_{n,k}=\left(\varepsilon_{n,k}(1),\dots,\varepsilon_{n,k}(T)\right)^{T}
\nonumber\\
&&=
\left(\left\langle \varepsilon_{1}, S_{n,k}^{d}\right\rangle_{L^{2}(\mathbb{M}_{d},d\nu ,\mathbb{R})},\dots,\left\langle \varepsilon_{T}, S_{n,k}^{d}\right\rangle_{L^{2}(\mathbb{M}_{d},d\nu ,\mathbb{R})}\right)^{T}.\nonumber\\
\label{femr}\end{eqnarray}
\subsection{GLS functional  parameter estimation}

According to equation (\ref{eq1}), applied to the case $\widetilde{X}_{t}=\varepsilon_{t},$ for every $t\in \mathbb{Z},$ one can consider the matrix sequence
\begin{eqnarray}&&
\left\{\boldsymbol{\Lambda}_{n}=\begin{bmatrix}
B_{n}(0) & \cdots & B_{n}(T-1)\\
\vdots & \vdots & \vdots\\
B_{n}(T-1) & \cdots & B_{n}(0)
\end{bmatrix},\quad n\in \mathbb{N}_{0}\right\}\nonumber\\
&&=\left\{\displaystyle{\int_{[-\pi,\pi]}}\scriptsize{\begin{bmatrix}
 f_{n}(\omega ) \quad & \cdots & \quad  \exp\left(i\omega (T-1)\right)f_{n}(\omega )\\
\quad \vdots  \quad &  \quad \vdots  \quad &\quad \vdots  \quad\\
\exp\left(i\omega (T-1)\right)f_{n}(\omega ) \quad & \cdots & \quad f_{n}(\omega )
\end{bmatrix}}d\omega,\ n\in \mathbb{N}_{0}\right\},\nonumber\\
\label{femssd}
 \end{eqnarray}
\noindent where, as in equation (\ref{klexpc2}), $B_{n}(t),$ $t=0,\dots,T-1,$ $n\in \mathbb{N}_{0},$ denote  the time--varying  diagonal coefficients in the series  expansion of the
functional entries of $\mathbf{R}_{\boldsymbol{\varepsilon}\boldsymbol\varepsilon}.$
 In the subsequent development we will assume that
  $X_{t,j}\in \mathbb{R},$ $t=1,\dots,T,$ $j=1,\dots,p,$ are such that \begin{equation}\sum_{n\in \mathbb{N}_{0}}\delta (n,d)\left(\mathbf{X}^{T}\boldsymbol{\Lambda}_{n}^{-1}\mathbf{X}\right)^{-1}<\infty.\label{fs}
   \end{equation}
   Note that under conditions in Theorem  4   in  \cite{MaMalyarenko} and  Lemma \ref{lemmat5ma}, from Cauchy--Schwartz inequality, $$\sum_{n\in\mathbb{N}_{0}}\delta(n,d)\boldsymbol{\Lambda}_{n}<\infty.$$

The  GLS functional parameter estimator of  $\boldsymbol{\beta}=\left[\beta_{1},\beta_{2},\dots,\beta_{p}\right]^{T}$ is computed from its projections into  the orthonormal basis
 $\{ S_{n,j}^{d},\ j=1,\dots,\delta(n,d),\ n\in \mathbb{N}_{0}\}$ of eigenfunctions of the Laplace Beltrami operator
$\Delta_{d}$ on $L^{2}(\mathbb{M}_{d},d\nu,\mathbb{R}).$
Specifically, from equation (\ref{ecseriesbeta}),  the GLS  $\widehat{\boldsymbol{\beta}}$  is the minimizer  of the  loss function
\begin{eqnarray}
L&=& \left\|\widetilde{\mathbf{Y}}-\mathbf{X}\boldsymbol{\beta }\right\|^{2}_{\mathbf{R}^{-1}_{\boldsymbol{\varepsilon}\boldsymbol{\varepsilon}}}
 =\sum_{n\in \mathbb{N}_{0}}\sum_{j=1}^{\delta (n,d)}\left[\widetilde{\mathbf{Y}}_{n,j}-\mathbf{X}\boldsymbol{\beta}_{n,j}\right]^{T}\boldsymbol{\Lambda}_{n}^{-1}
\left[\widetilde{\mathbf{Y}}_{n,j}-\mathbf{X}\boldsymbol{\beta}_{n,j}\right]\nonumber\\
&=&\sum_{n\in \mathbb{N}_{0}}\sum_{j=1}^{\delta (n,d)}\left\|\boldsymbol{\varepsilon}_{n,j}\right\|^{2}_{\boldsymbol{\Lambda}_{n}^{-1}},
\label{lfgls}
\end{eqnarray}
\noindent where, as before,  $\mathbf{X}=(X_{t,h})_{t=1,\dots,T; h=1,\dots,p},$
and    for $j=1,\dots,\delta(n,d)$  and  $n\in \mathbb{N}_{0},$   $\widetilde{\mathbf{Y}}_{n,j},$
$ \boldsymbol{\beta}_{n,j}$  and $\boldsymbol{\varepsilon}_{n,j}$ have been respectively  introduced in equations  (\ref{fclr}) and (\ref{femr}).
  In equation (\ref{lfgls}), for each $n\in \mathbb{N}_{0},$  $\boldsymbol{\Lambda}_{n}^{-1}$ denotes the matrix defining  the bilinear form  characterizing the inner product of the Reproducing Kernel Hilbert Space (RKHS) of $\left(\boldsymbol{\varepsilon}_{n,j},\ j=1,\dots,\delta(n,d)\right).$ Thus,

\begin{equation}
\label{ecbetamcg}
\widehat{\boldsymbol{\beta}}_{n,j}
=(\mathbf{X}^{T}\boldsymbol{\Lambda}^{-1}_{n}\mathbf{X})^{-1}\mathbf{X}^{T}\boldsymbol{\Lambda}_{n}^{-1}\widetilde{\mathbf{Y}}_{n,j},\quad j=1,\dots,\delta (n,d), \  n\in\mathbb{N}_{0}.
\end{equation}

Our predictor of the response is then given by:

\begin{eqnarray}
&&\widehat{\mathbf{Y}}(\mathbf{y})=\mathbf{H}\left(\mathbf{X}(\widehat{\boldsymbol{\beta }})\right)(\mathbf{y}),\quad \mathbf{y}\in \mathbb{M}_{d},
\label{eqprednl}
\end{eqnarray}
\noindent where
\begin{eqnarray}&&\widehat{\boldsymbol{\beta }}(\mathbf{y})=  \sum_{n\in \mathbb{N}_{0}}\sum_{j=1}^{\delta (n,d)}\widehat{\boldsymbol{\beta}}_{n,j}
 S_{n,j}^{d}(\mathbf{y})\nonumber\\
 &&=\left(\sum_{n\in \mathbb{N}_{0}}\sum_{j=1}^{\delta(n,d)}\widehat{\beta}_{n,j}^{(1)} S_{n,j}^{d}(\mathbf{y}),\dots,\sum_{n\in \mathbb{N}_{0}}\sum_{j=1}^{\delta(n,d)}\widehat{\beta}_{n,j}^{(p)} S_{n,j}^{d}(\mathbf{y})
 \right)^{T},\quad \mathbf{y}\in \mathbb{M}_{d}.
\label{eqprednlbb2}
\end{eqnarray}
\subsection{Moment properties of the GLS functional parameter estimator}
\label{secpropertbetaest}
The following proposition  provides the functional second--order moments of the  GLS parameter estimator $\widehat{\boldsymbol{\beta }}$ of $\boldsymbol{\beta }$ computed
in (\ref{eqprednlbb2}).
\begin{proposition}

 The following identities hold:
 \begin{itemize}
 \item[(i)]
 $E[\hat{\boldsymbol{\beta}}_{n,j}]=\boldsymbol{\beta}_{n,j},$ $j=1,\dots,\delta(n,d),$  $n\in \mathbb{N}_{0},$   i.e.,  $E\left[\widehat{\boldsymbol{\beta }}\right]=\boldsymbol{\beta }.$

\item[(ii)] $\mbox{Var}\left[\hat{\boldsymbol{\beta}}_{n,j}\right]=(\mathbf{X}^{T}\boldsymbol{\Lambda}_{n}^{-1}\mathbf{X})^{-1},$ $j=1,\dots,\delta(n,d),$  $n\in \mathbb{N}_{0},$  i.e.,\linebreak
$\mbox{Var}(\widehat{\boldsymbol{\beta }})=\sum_{n\in \mathbb{N}_{0}}\delta (n,d)\left(\mathbf{X}^{T}\boldsymbol{\Lambda}_{n}^{-1}\mathbf{X}\right)^{-1}.$
 \end{itemize}
\end{proposition}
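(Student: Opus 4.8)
The plan is to reduce both assertions to the classical algebra of the generalized least--squares estimator, performed coefficient by coefficient in the eigenbasis $\{S_{n,j}^{d}\}$, and then to recover the functional statements by summing over $(n,j)$. Throughout I would treat $\mathbf{X}=(X_{t,j})$ as the fixed design (equivalently, condition on $(\underline{X}_{1},\dots,\underline{X}_{T})$, which is legitimate since the error is assumed independent of the covariates). The starting point is to substitute the projected model (\ref{fclr}), that is $\widetilde{\mathbf{Y}}_{n,j}=\mathbf{X}\boldsymbol{\beta}_{n,j}+\boldsymbol{\varepsilon}_{n,j}$, into the estimator (\ref{ecbetamcg}). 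Using $(\mathbf{X}^{T}\boldsymbol{\Lambda}_{n}^{-1}\mathbf{X})^{-1}\mathbf{X}^{T}\boldsymbol{\Lambda}_{n}^{-1}\mathbf{X}=\mathbf{I}_{p}$, this yields the affine representation
\[
\widehat{\boldsymbol{\beta}}_{n,j}-\boldsymbol{\beta}_{n,j}
=(\mathbf{X}^{T}\boldsymbol{\Lambda}_{n}^{-1}\mathbf{X})^{-1}\mathbf{X}^{T}\boldsymbol{\Lambda}_{n}^{-1}\boldsymbol{\varepsilon}_{n,j},
\]
so that everything follows from the first two moments of the random vector $\boldsymbol{\varepsilon}_{n,j}$.

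For part (i), I would take expectation in the display above. Since $\{\varepsilon_{t}\}$ is a zero--mean functional time series, each projected vector $\boldsymbol{\varepsilon}_{n,j}$ has mean zero, hence the stochastic term vanishes in expectation and $E[\widehat{\boldsymbol{\beta}}_{n,j}]=\boldsymbol{\beta}_{n,j}$. Assembling the $p$--dimensional functional estimator through (\ref{eqprednlbb2}) and invoking linearity gives $E[\widehat{\boldsymbol{\beta}}]=\boldsymbol{\beta}$.

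For part (ii), the crucial identification is that the covariance matrix of $\boldsymbol{\varepsilon}_{n,j}$ is exactly $\boldsymbol{\Lambda}_{n}$. Applying the uncorrelation identity (\ref{eqincorr}) of Lemma \ref{lemmat5ma} to the error process $\varepsilon_{t}$ (which satisfies its hypotheses) gives $E[\varepsilon_{n,j}(t)\varepsilon_{m,l}(s)]=\delta_{n,m}\delta_{j,l}B_{n}(t-s)$, whence $E[\boldsymbol{\varepsilon}_{n,j}\boldsymbol{\varepsilon}_{n,j}^{T}]=\boldsymbol{\Lambda}_{n}$, with $\boldsymbol{\Lambda}_{n}$ as in (\ref{femssd}), and $E[\boldsymbol{\varepsilon}_{n,j}\boldsymbol{\varepsilon}_{m,l}^{T}]=\mathbf{0}$ whenever $(n,j)\neq(m,l)$. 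Writing $\mathbf{A}_{n}=(\mathbf{X}^{T}\boldsymbol{\Lambda}_{n}^{-1}\mathbf{X})^{-1}\mathbf{X}^{T}\boldsymbol{\Lambda}_{n}^{-1}$, the variance of each coefficient collapses through the standard GLS sandwich:
\[
\mbox{Var}[\widehat{\boldsymbol{\beta}}_{n,j}]
=\mathbf{A}_{n}\boldsymbol{\Lambda}_{n}\mathbf{A}_{n}^{T}
=(\mathbf{X}^{T}\boldsymbol{\Lambda}_{n}^{-1}\mathbf{X})^{-1},
\]
where the symmetry of $\boldsymbol{\Lambda}_{n}$ reduces $\boldsymbol{\Lambda}_{n}^{-1}\boldsymbol{\Lambda}_{n}\boldsymbol{\Lambda}_{n}^{-1}$ to $\boldsymbol{\Lambda}_{n}^{-1}$.

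Finally, to obtain the aggregated variance I would integrate $\mbox{Var}[\widehat{\boldsymbol{\beta}}(\mathbf{y})]$ over $\mathbb{M}_{d}$ using the expansion (\ref{eqprednlbb2}). By orthonormality of the eigenfunctions, $\int_{\mathbb{M}_{d}}S_{n,j}^{d}(\mathbf{y})S_{m,l}^{d}(\mathbf{y})\,d\nu(\mathbf{y})=\delta_{n,m}\delta_{j,l}$, so all cross terms disappear and the integrated variance reduces to $\sum_{n}\sum_{j=1}^{\delta(n,d)}\mbox{Var}[\widehat{\boldsymbol{\beta}}_{n,j}]$. Since each of the $\delta(n,d)$ summands for a fixed $n$ equals $(\mathbf{X}^{T}\boldsymbol{\Lambda}_{n}^{-1}\mathbf{X})^{-1}$, this gives $\mbox{Var}(\widehat{\boldsymbol{\beta}})=\sum_{n\in\mathbb{N}_{0}}\delta(n,d)(\mathbf{X}^{T}\boldsymbol{\Lambda}_{n}^{-1}\mathbf{X})^{-1}$, which is finite by assumption (\ref{fs}). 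The only genuinely non--mechanical step is the identification $E[\boldsymbol{\varepsilon}_{n,j}\boldsymbol{\varepsilon}_{n,j}^{T}]=\boldsymbol{\Lambda}_{n}$ together with the cross--block orthogonality; once these are in place, both claims amount to the textbook GLS moment computation carried out fibrewise over the spectrum of $\Delta_{d}$.
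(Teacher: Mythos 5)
Your proposal is correct and follows essentially the same route as the paper's proof: the coefficient-wise affine representation $\widehat{\boldsymbol{\beta}}_{n,j}=\boldsymbol{\beta}_{n,j}+(\mathbf{X}^{T}\boldsymbol{\Lambda}_{n}^{-1}\mathbf{X})^{-1}\mathbf{X}^{T}\boldsymbol{\Lambda}_{n}^{-1}\boldsymbol{\varepsilon}_{n,j}$, the standard GLS sandwich collapsing via $\boldsymbol{\Lambda}_{n}^{-1}\boldsymbol{\Lambda}_{n}\boldsymbol{\Lambda}_{n}^{-1}=\boldsymbol{\Lambda}_{n}^{-1}$, and aggregation over $(n,j)$ by orthonormality of the eigenbasis under condition (\ref{fs}). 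Your treatment is in fact slightly more explicit than the paper's on two points it leaves implicit, namely the identification $E[\boldsymbol{\varepsilon}_{n,j}\boldsymbol{\varepsilon}_{n,j}^{T}]=\boldsymbol{\Lambda}_{n}$ with cross-block orthogonality from (\ref{eqincorr}), and the conditioning on the design justified by independence of errors and covariates.
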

\begin{proof}

The proof of (i) and (ii) follows straightforward as in the real--valued case. Specifically,

\begin{eqnarray}&&
E\left[ \hat{\boldsymbol{\beta}}_{n,j}\right]= E\left[(\mathbf{X}^{T}\boldsymbol{\Lambda}_{n}^{-1}\mathbf{X})^{-1}\mathbf{X}^{T}\boldsymbol{\Lambda}_{n}^{-1}
\widetilde{\mathbf{Y}}_{n,j}\right]\nonumber\\
&&=(\mathbf{X}^{T}\boldsymbol{\Lambda}_{n}^{-1}\mathbf{X})^{-1}\mathbf{X}^{T}\boldsymbol{\Lambda}_{n}^{-1}
E\left[\widetilde{\mathbf{Y}}_{n,j}\right]\nonumber\\
&&=(\mathbf{X}^{T}\boldsymbol{\Lambda}_{n}^{-1}\mathbf{X})^{-1}\mathbf{X}^{T}\boldsymbol{\Lambda}_{n}^{-1}\mathbf{X}\boldsymbol{\beta}_{n,j}\nonumber\\
&&=\boldsymbol{\beta}_{n,j},\quad j=1,\dots,\delta(n,d),\  n\in \mathbb{N}_{0}.
\label{ubbb}
\end{eqnarray}

Hence, from equation (\ref{ubbb}) and Fubini--Tonelli Theorem,  for every $\mathbf{y}\in \mathbb{M}_{d},$

\begin{eqnarray}
&& E\left[\hat{\boldsymbol{\beta}}(\mathbf{y})\right]=E\left[\left(\sum_{n=0}^{\infty}\sum_{k=1}^{\delta(n,d)}
\hat{\beta}_{n,k}^{(1)}
S^{d}_{n,k}(\mathbf{y}),\dots,\sum_{n=0}^{\infty}\sum_{k=1}^{\delta(n,d)}\hat{\beta}_{n,k}^{(p)}S^{d}_{n,k}(\mathbf{y})\right)^{T}\right]\nonumber\\
&&=\left(\sum_{n=0}^{\infty} \sum_{k=1}^{\delta(n,d)}E\left[\hat{\beta}_{n,k}^{(1)}\right] S^{d}_{n,k}(\mathbf{y}),\dots,\sum_{n=0}^{\infty} \sum_{k=1}^{\delta(n,d)}E\left[\hat{\beta}_{n,k}^{(p)}\right]S^{d}_{n,k}(\mathbf{y})\right)^{T}
\nonumber\\
&&=\left(\sum_{n=0}^{\infty}\sum_{k=1}^{\delta(n,d)}\beta_{n,k}^{(1)}S^{d}_{n,k}(\mathbf{y}),\dots,\sum_{n=0}^{\infty}\sum_{k=1}^{\delta(n,d)}
\beta_{n,k}^{(p)}S^{d}_{n,k}(\mathbf{y})\right)^{T}= \boldsymbol{\beta}(\mathbf{y}).\nonumber\\
\label{ub}
\end{eqnarray}

Regarding (ii),  as it is well known, since for every $j=1,\dots,\delta (n,d),$ $n\in \mathbb{N}_{0},$  \begin{equation}\widehat{\boldsymbol{\beta}}_{n,j}=\boldsymbol{\beta }_{n,j}+(\mathbf{X}^{T}\boldsymbol{\Lambda}_{n}^{-1}\mathbf{X})^{-1}
\mathbf{X}^{T}\boldsymbol{\Lambda}_{n}^{-1}\boldsymbol{\varepsilon}_{n,j},
\label{equb}
\end{equation}
\noindent we have

\begin{eqnarray}
\mbox{Var}\left[\widehat{\boldsymbol{\beta}}_{n,j}\right]&=&
E\left[ \left(\widehat{\boldsymbol{\beta}}_{n,j}-\boldsymbol{\beta}_{n,j}\right)^{T}\left(\hat{\boldsymbol{\beta}}_{n,j}-\boldsymbol{\beta}_{n,j}\right)\right]
\nonumber\\
&=&
(\mathbf{X}^{T}\boldsymbol{\Lambda}_{n}^{-1}\mathbf{X})^{-1}\mathbf{X}^{T}\boldsymbol{\Lambda}_{n}^{-1}\boldsymbol{\Lambda}_{n}
\boldsymbol{\Lambda}_{n}^{-1}\mathbf{X}(\mathbf{X}^{T}\boldsymbol{\Lambda}_{n}^{-1}\mathbf{X})^{-1}
\nonumber\\
&=&(\mathbf{X}^{T}\boldsymbol{\Lambda}_{n}^{-1}\mathbf{X})^{-1},\quad \forall j\in \{1,\dots,\delta(n,d)\},\quad n\in \mathbb{N}_{0}.
\label{vgls}
\end{eqnarray}

From (\ref{vgls}), applying uncorrelation of  the sequence of centered    random processes
$\left\{\varepsilon_{n,j}(t),\ t\in [0,T]\right\}$  (see equation \ref{equb})
\begin{eqnarray}
&&E\left[\left\|\widehat{\boldsymbol{\beta }}-\boldsymbol{\beta }\right\|_{L^{2}(\mathbb{M}_{d},d\nu,\mathbb{R})}^{2}\right]=
\sum_{n\in \mathbb{N}_{0}}\delta (n,d)(\mathbf{X}^{T}\boldsymbol{\Lambda}_{n}^{-1}\mathbf{X})^{-1}<\infty,\nonumber
\end{eqnarray}
\noindent  under condition (\ref{fs}).
\end{proof}
\subsection{Functional spectral based  plug--in estimation of $\boldsymbol{\beta}$}
\label{secmiss}
This section presents a plug--in GLS estimation methodology when the second order structure of the error term is unknown. In this case,  the entries of the matrix sequence $\left\{\boldsymbol{\Lambda}_{n},\ n\in \mathbb{N}_{0}\right\}$ are estimated
in the spectral domain  under the following semiparametric  modelling  (see  \cite{RuizMedina2022}):

\medskip

\noindent \textbf{Assumption I}. Assume that the entries $f_{n},$ $n\in \mathbb{N}_{0},$ of the matrix sequence in (\ref{femssd}) admit the following
semiparametric modeling, for every $n\in \mathbb{N}_{0},$
\begin{eqnarray}f_{n,\theta}(\omega)&=&
B_{n}^{\eta}(0)M_{n}(\omega )\left[4(\sin(\omega /2))^{2}\right]^{-\alpha (n,\theta)/2},\ \theta \in \Theta,\ \omega \in [-\pi,\pi],
\label{eqsmc1}\end{eqnarray}
\noindent where $\left\{\alpha(n,\theta),\ n\in \mathbb{N}_{0}\right\}$ are the diagonal coefficients involved in the series expansion of  kernel
$\mathcal{K}_{\mathcal{A}_{\theta }}$ of LRD operator  $\mathcal{A}_{\theta },$ which is assumed to hold in the following sense:
\begin{eqnarray}&&
\int_{\mathbb{M}_{d}}\mathcal{K}_{\mathcal{A}_{\theta }}(\mathbf{y},\mathbf{z})f(\mathbf{y})g(\mathbf{z})d\mathbf{y}d\mathbf{z}
\nonumber\\
&&=\int_{\mathbb{M}_{d}}\sum_{n\in \mathbb{N}_{0}} \alpha (n,\theta)
\sum_{j=1}^{\delta (n,d)}S_{n,j}^{d}(\mathbf{y})S_{n,j}^{d}(\mathbf{z})f(\mathbf{y})g(\mathbf{z})d\mathbf{y}d\mathbf{z},\nonumber
\end{eqnarray}
\noindent for every $f,g\in \mathcal{C}^{\infty}(\mathbb{M}_{d}),$ with  $\mathcal{C}^{\infty}(\mathbb{M}_{d})$ denoting the space of infinitely differentiable functions with compact support contained in  $\mathbb{M}_{d}.$
 Here,  $l_{\alpha }\leq \alpha (n,\theta )\leq L_{\alpha },$  for every $n\in \mathbb{N}_{0},$  and $\theta \in \Theta,$ with $l_{\alpha }, L_{\alpha }\in (0,1/2).$
 The elements of the sequence
 $\left\{B_{n}^{\eta}(0),\ n\in \mathbb{N}_{0}\right\}$ are the eigenvalues of the trace autocovariance operator $R^{\eta}_{0}$ of the innovation process  $\eta ,$ involved in the definition of the  error term. The operator family $\left\{\mathcal{M}_{\omega},\ \omega \in [-\pi,\pi]\right\}$ is included in the space of trace operators, and, hence, its elements admit the following series representation:
\begin{eqnarray}
&&\mathcal{K}_{\mathcal{M}_{\omega}}(\mathbf{y},\mathbf{z})=\sum_{n\in \mathbb{N}_{0}} M_{n}(\omega )
\sum_{j=1}^{\delta (n,d)}S_{n,j}^{d}(\mathbf{y})S_{n,j}^{d}(\mathbf{z}),\quad \mathbf{y},\mathbf{z}\in \mathbb{M}_{d},\nonumber
\end{eqnarray}
\noindent in the norm of the space $\mathcal{S}(L^{2}(\mathbb{M}_{d}, d\nu, \mathbb{C}))$ of Hilbert--Schmidt operators on $L^{2}(\mathbb{M}_{d}, d\nu, \mathbb{C}).$ Note that in the particular case where  $\alpha (n,\theta)=0,$ for every $n\in \mathbb{N}_{0},$ and $\theta \in \Theta ,$ $X$ displays SRD under the condition \begin{equation}
\sum_{\tau\in \mathbb{Z}}\sum_{n\in \mathbb{N}_{0}}\delta (n,d)\left|\int_{-\pi}^{\pi}\exp(i\omega \tau)M_{n}(\omega )d\omega \right|<\infty.
\label{ftracetime}
\end{equation}

  \medskip

  We apply the minimum contrast estimation strategy introduced in equations (5.1)--(5.19) in \cite{RuizMedina2022}, and
  equations (3.8)--(3.16) in \cite{Ovalle23}, considering the special case of $H=L^{2}(\mathbb{M}_{d},d\nu,\mathbb{C}).$
  Specifically, parameter $\theta$ in equation (\ref{eqsmc1}), characterizing the pure point spectrum of LRD operator $\mathcal{A}_{\theta },$
  is estimated by $\widehat{\theta }_{T}$ satisfying
  \begin{eqnarray}
\widehat{\theta }_{T}&=& \mbox{arg} \ \min_{\theta \in \Theta }
\left\|-\int_{-\pi}^{\pi }p_{\omega }^{(T)}
\ln\left(\Upsilon_{\omega ,\theta }\right)\mathcal{W}_{\omega} d\omega\right\|_{\mathcal{L}(L^{2}(\mathbb{M}_{d},d\nu; \mathbb{C}))}, \label{mfetheta}
\end{eqnarray}
  \noindent where  for each $\theta \in \Theta,$  and $\omega \in [-\pi,\pi],$  $\omega \neq 0,$
  \begin{equation}
 \Upsilon_{\omega ,\theta }=[\mathcal{N}_{\theta }]^{-1}\mathcal{F}_{\omega ,\theta }= \mathcal{F}_{\omega ,\theta }[\mathcal{N}_{\theta }]^{-1},
\label{fsdo}
\end{equation}
\noindent with, as before, $\left\{\mathcal{F}_{\omega,\theta },\ \omega \in [-\pi,\pi]\right\}$  being the  spectral density operator  family.
Operator $\mathcal{N}_{\theta }$ has kernel
  \begin{eqnarray}&&
\mathcal{K}_{\mathcal{N}_{\theta }}(\mathbf{y},\mathbf{z})
=\sum_{n\in \mathbb{N}_{0}}\widetilde{W}(n)\left[\int_{-\pi}^{\pi} \frac{B_{n}^{\eta }(0)M_{n}(\omega )\left[4(\sin(\omega /2))^{2}\right]^{-\alpha (n,\theta)/2}
}{|\omega|^{-\gamma }} d\omega \right]\nonumber\\
&&\hspace*{0.5cm}\times
\sum_{j=1}^{\delta (n,d)}S_{n,j}^{d}(\mathbf{y})S_{n,j}^{d}(\mathbf{z}),\quad \mathbf{y},\mathbf{z}\in \mathbb{M}_{d},\ \theta \in \Theta,
\label{eqdeftildew2}
\end{eqnarray}
\noindent where $\widetilde{W}$ denotes the  positive self--adjoint operator on $L^{2}(\mathbb{M}_{d}, d\nu,\mathbb{C})$ factorizing the weighting operator $\mathcal{W}_{\omega }=\widetilde{W}|\omega |^{\gamma },$  for every $\omega \in [-\pi,\pi],$ and $\gamma >0.$
 Fourier transform inversion formula leads to the corresponding  estimation
$$\widehat{B}_{n,\hat{\theta}_{T}}(t)=\int_{-\pi}^{\pi}\exp(i\omega t)f_{n,\widehat{\theta}_{T}}(\omega)d\omega,\quad n\in \mathbb{N}_{0},$$
\noindent  of the entries of $\boldsymbol{\Lambda}_{n,\widehat{\theta}_{T}},$ given by

\[
\boldsymbol{\Lambda}_{n,\widehat{\theta}_{T}}=\left(\begin{bmatrix}
\widehat{B}_{n,\widehat{\theta}_{T}}(0) & \cdots & \widehat{B}_{n,\widehat{\theta}_{T}}(T-1)\\
\vdots & \vdots & \vdots\\
\widehat{B}_{n,\widehat{\theta}_{T}}(T-1) & \cdots & \widehat{B}_{n,\widehat{\theta}_{T}}(0)
\end{bmatrix}\right),\quad n\in \mathbb{N}_{0}.
\]
\noindent Thus, for every $n\in \mathbb{N}_{0},$
\begin{eqnarray}&&
\widehat{\boldsymbol{\beta}}_{n, j, \widehat{\theta}_{T}}=\left(\mathbf{X}^{T}
\boldsymbol{\Lambda}^{-1}_{n,\widehat{\theta}_{T}}\mathbf{X}\right)^{-1}\mathbf{X}^{T}
\boldsymbol{\Lambda}^{-1}_{n,\widehat{\theta}_{T}}\widetilde{\mathbf{Y}}_{n,j},\quad n\in \mathbb{N}_{0},\ j=1,\dots, \delta(n,d),\nonumber
\label{ent}
\end{eqnarray}
\noindent
 and the corresponding  plug--in nonlinear predictor is  computed  as
\begin{equation}
\widehat{\mathbf{Y}}_{\widehat{\theta}_{T}}(\mathbf{y})=
\mathbf{H}\left(\mathbf{X}(\widehat{\boldsymbol{\beta}}_{\widehat{\theta}_{T}})\right)(\mathbf{y}),\quad \mathbf{y}\in\mathbb{M}_{d},
\label{emppred}
\end{equation}
\noindent where
\begin{eqnarray}&&\widehat{\boldsymbol{\beta }}_{\widehat{\theta}_{T}}(\mathbf{y})=  \sum_{n\in \mathbb{N}_{0}}\sum_{j=1}^{\delta (n,d)}\widehat{\boldsymbol{\beta}}_{n,j,\widehat{\theta}_{T}}
 S_{n,j}^{d}(\mathbf{y})\nonumber\\
 &&=\left(\sum_{n\in \mathbb{N}_{0}}\sum_{j=1}^{\delta(n,d)}\widehat{\beta}_{n,j,\widehat{\theta}_{T}}^{(1)} S_{n,j}^{d}(\mathbf{y}),\dots,\sum_{n\in \mathbb{N}_{0}}\sum_{j=1}^{\delta(n,d)}\widehat{\beta}_{n,j,\widehat{\theta}_{T}}^{(p)} S_{n,j}^{d}(\mathbf{y})
 \right)^{T},\quad \mathbf{y}\in \mathbb{M}_{d}.
\label{eqprednlbb}
\end{eqnarray}

\section{Simulations}

\label{secsimulation}
The performance of the proposed nonlinear  multiple functional regression predictor is illustrated for $\mathbb{M}_{d}=\mathbb{S}_{d}=\{ \mathbf{y}\in \mathbb{R}^{d+1}; \ \|\mathbf{y}\|=1\},$  under a log--Gaussian scenario, and  for $\mathbf{H}(\mathbf{X}(\boldsymbol{\beta })(\mathbf{y}))=\sum_{k=0}^{\infty}\frac{(\mathbf{X}(\boldsymbol{\beta }(\mathbf{y})))^{k}}{k!},$ with
$$(\mathbf{X}(\boldsymbol{\beta }(\mathbf{y})))^{k}=\left(\left(\sum_{j=1}^{p}X_{1,j}\beta_{j}(\mathbf{y})\right)^{k},\dots, \left(\sum_{j=1}^{p}X_{T,j}\beta_{j}(\mathbf{y})\right)^{k}\right)^{T}.$$
  In the  Supplementary Material, the linear and Gaussian case is considered.   In Sections \ref{secsimulationch5}  and \ref{secMISSsim}, the theoretical and plug--in GLS predictors are  respectively computed by projection into  the  spherical harmonics basis.

\subsection{Theoretical predictor}
\label{secsimulationch5}

The regression prediction results are tested for functional sample sizes  $T=110, 300, 500.$   We consider the case where the covariates are strong correlated. Specifically, their dynamic at each one of the two spherical scales selected is represented in terms of fractional Brownian motion with respective Hurst parameter values  $H=0.5/k,$  $k=1,2.$  In the generations, we have considered   Matlab function \emph{wfbm}, based on the wavelet transform  (see, e.g., \cite{AbrySellan1996}). Our choice of the spherical functional regression parameters is given by the    eigenfunctions $S^{2}_{1,1},$ and  $S^{2}_{1,2}$ of the spherical  Laplace Beltrami operator,   displayed at the two plots of the first line of Figure  18 in Section 2.1 of the Supplementary Material. The regression error is generated from its truncated   expansion  (see Figure  19  in Section 2.1 of the Supplementary Material, where realization 75 is showed). This realization corresponds to the projected process  into the subspace generated by the eight eigenfunctions plotted in  Figure  18   of Section  2.1 of the Supplementary Material. The corresponding  time varying coefficients are computed from  the inverse Fourier transform of the  square root of the frequency--varying eigenvalues (\ref{eqsmc1}) of the  elements of the spectral density operator family  under \textbf{Assumption I} (see left--hand--side of Figure \ref{Fig1ts}).

\begin{figure}[H]
\begin{center}
\includegraphics[height=0.3\textheight, width=0.45\textwidth]{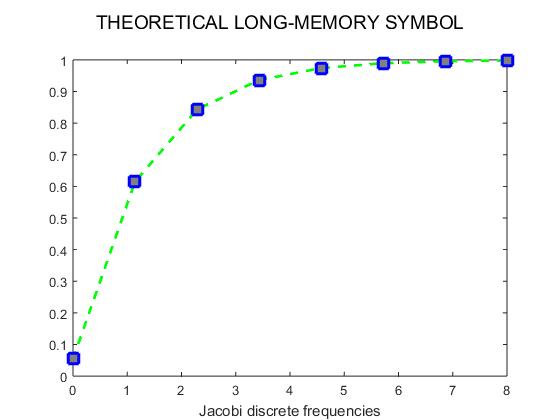}
\includegraphics[height=0.3\textheight, width=0.45\textwidth]{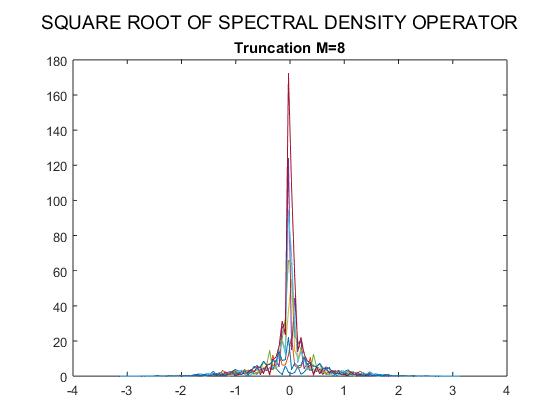}
\caption{LRD operator eigenvalues  (left--hand side), and the squared root of  frequency varying eigenvalues in 50th realization, for the first 8  Laplace Beltrami operator eigenspaces selected (right--hand side)}
\label{Fig1ts}
\end{center}
\end{figure}
The   nonlinear transformation  of the computed  truncated version of the GLS functional  linear predictor   approximates equation (\ref{eqprednl}). See Figures \ref{Figure5ts} and \ref{Figure6ts} where one realization of the nonlinear response and its functional regression prediction are respectively showed.
The corresponding empirical mean absolute errors are obtained from  $R=100$ repetitions of functional samples of
sizes $T=110,300,500.$     The results are displayed  in Figures  20 and  21 in Section  2.1 of the Supplementary Material for the functional sample sizes $T=110,300,$   and for   $T=500$ in  Figure \ref{Figure7tst500}.  One can observe  an important reduction of such  errors  as  the functional sample size  increases.

\begin{figure}[H]
\begin{center}
\includegraphics[height=0.9\textheight, width=\textwidth]{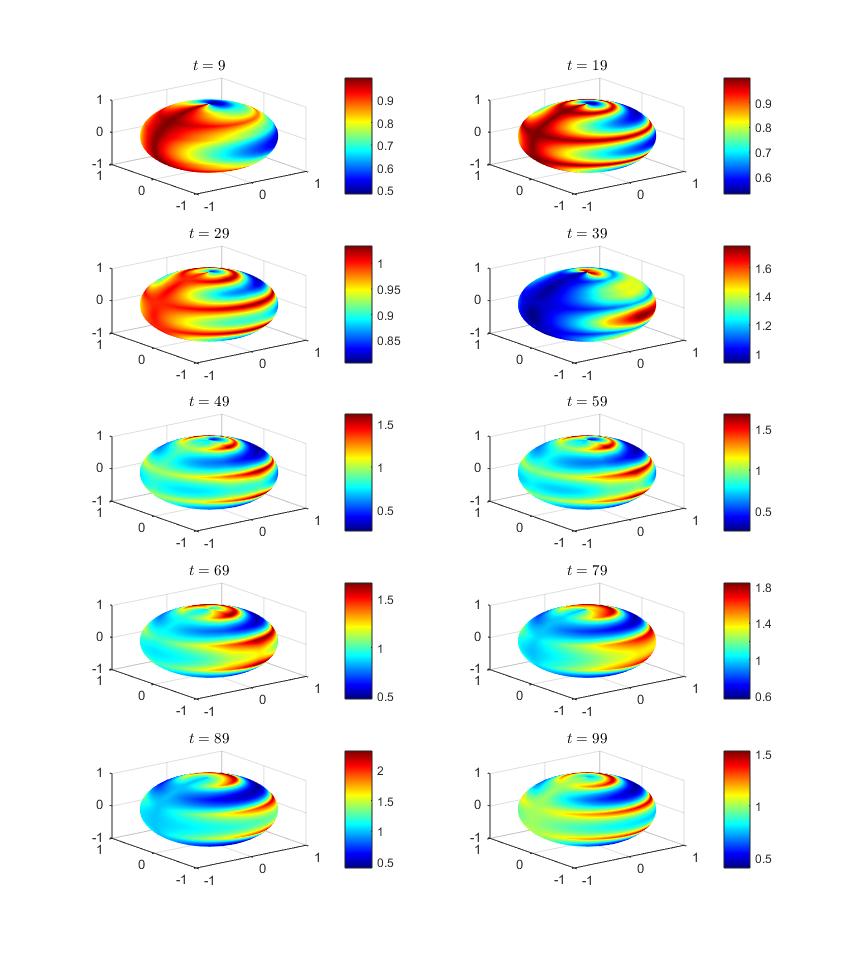}
\caption{Log--Gaussian nonlinear spherical functional response values at times $t=9,19,29,39,49,59,69,79,89,99$ (corresponding to realization 75)}
\label{Figure5ts}
\end{center}
\end{figure}

\begin{figure}[H]
\begin{center}
\includegraphics[height=0.9\textheight, width=\textwidth]{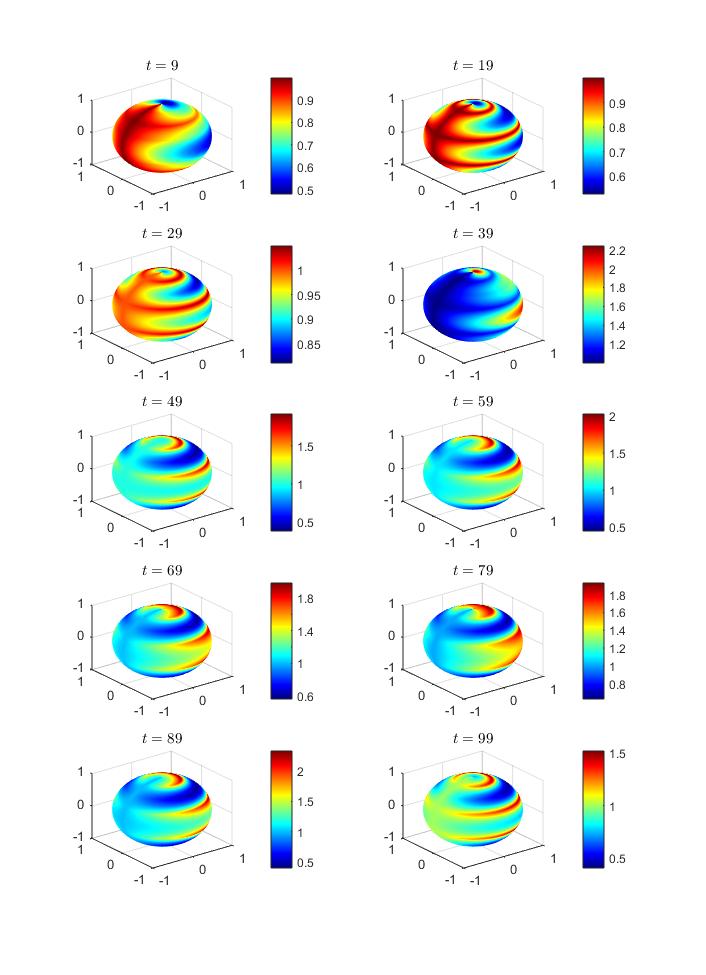}
\caption{Nonlinear  spherical functional response   predictions at times $t=9,19,29,39,49,59,69,79,89,99$ (corresponding to realization 75)}
\label{Figure6ts}
\end{center}
\end{figure}

\begin{figure}[H]
\begin{center}
\includegraphics[height=0.9\textheight, width=\textwidth]{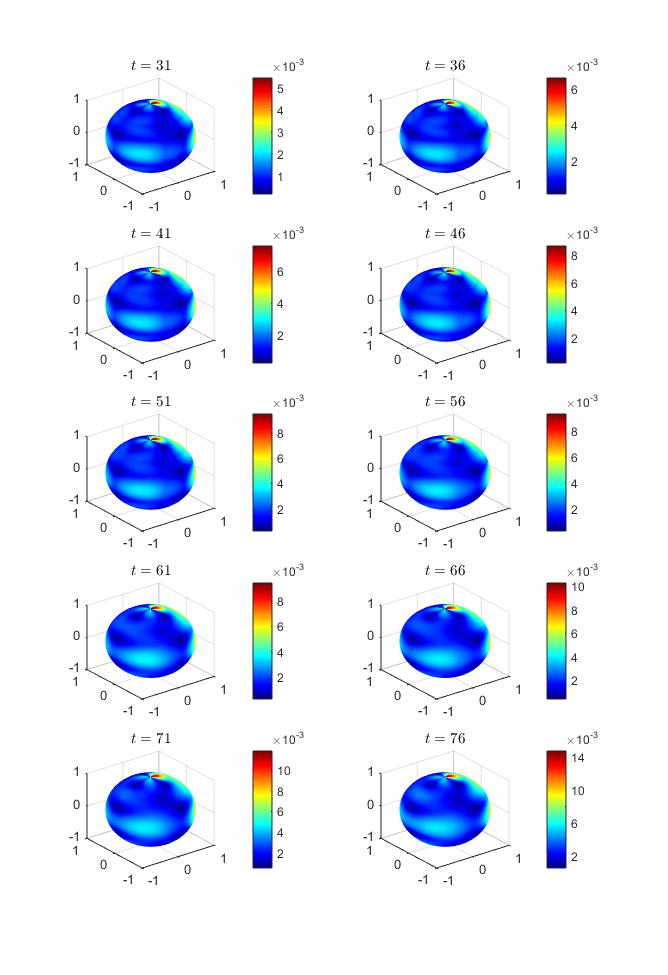}
\caption{Empirical mean absolute errors, based on 100 repetitions, for functional sample size $T=500,$  associated with the spherical functional regression predictor at times $t=31,36,41,46,51,56, 61, 66, 71, 76$}
\label{Figure7tst500}
\end{center}
\end{figure}


\subsection{Plug--in predictor}
\label{secMISSsim}
This section analyzes the case where the covariance operator family of the error term is unknown.  As before, the residual analysis is performed in the double spectral domain, implementing the minimum contrast estimation of the second--order structure of the error term displaying spatial scale varying LRD. Two  cases are analyzed respectively corresponding to   increasing LRD eigenvalue sequence  (see left--hand side of Figure \ref{Fig1ts}),  and  decreasing  LRD eigenvalue sequence (see Section 2.2.1 of the Supplementary Material).

In this section,  in the generation of the temporal covariates, we consider  the Hurst parameter value $H=0.001.$
This parameter value is close to the lowest bound of the interval $(0,1/2),$
where negative long--term correlation is displayed by fractional Brownian motion. Thus,   the generated  temporal dynamics is  very  far from the independent increment dynamics  of   Brownian motion.
      This feature makes the results in our simulation study more attractive, since reveal a good performance of our approach in the most complex case, where plug--in prediction is implemented under temporal  strong--dependence  of the  covariates and of  the  functional error term. Our choice here of the spherical functional regression parameters corresponds to the eigenfunctions $S^{2}_{1,1},$ and  $S^{2}_{2,1}$ plotted at the left--hand side of the first two lines of Figure  18 in Section 2.1 of the Supplementary Material.

 Minimum contrast estimation in the functional spectral domain  is implemented to approximate the second--order  structure of the LRD isotropic spherical functional  error term in the spectral domain  (see \cite{Ovalle23}; \cite{RuizMedina2022}).  In this implementation, we consider a set of $100$ candidates for the first eight eigenvalues of the LRD operator (see Figure 22 in Section 2.2 in the Supplementary Material).

The $50$th realization of  the generated spherical functional error term, and its spectral based minimum contrast estimation are plotted in Figures 23  and 24  in Section  2.2 of the Supplementary Material, respectively. The  empirical mean absolute errors, based on $100$ repetitions  (see Figure  25 in   Section 2.2 of the Supplementary Material),  and based on $500$ repetitions (see Figure \ref{f3IncremissR500}), associated with the minimum contrast estimates of the error term, are computed.  When the number of repetitions increases from $100$ to $500,$ a substantial improvement is observed, reflected in  the reduced extension of the spherical  areas with the highest values of the empirical mean absolute errors.
\begin{figure}[H]
\begin{center}
\includegraphics[height=0.9\textheight, width=\textwidth]{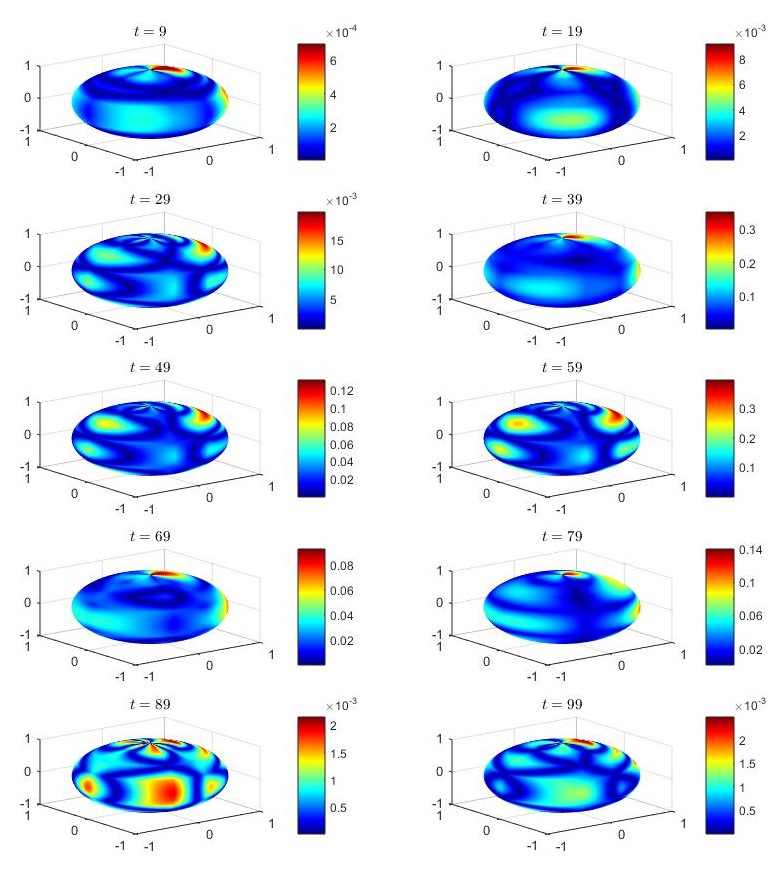}
\caption{Empirical mean absolute errors, based on $500$ repetitions, associated with the minimum contrast estimator of the spherical functional error term  at times $t=9,19,29,39,49,59,69,79,89,99$}
\label{f3IncremissR500}
\end{center}
\end{figure}

\begin{figure}[H]
\begin{center}
\includegraphics[height=0.9\textheight, width=\textwidth]{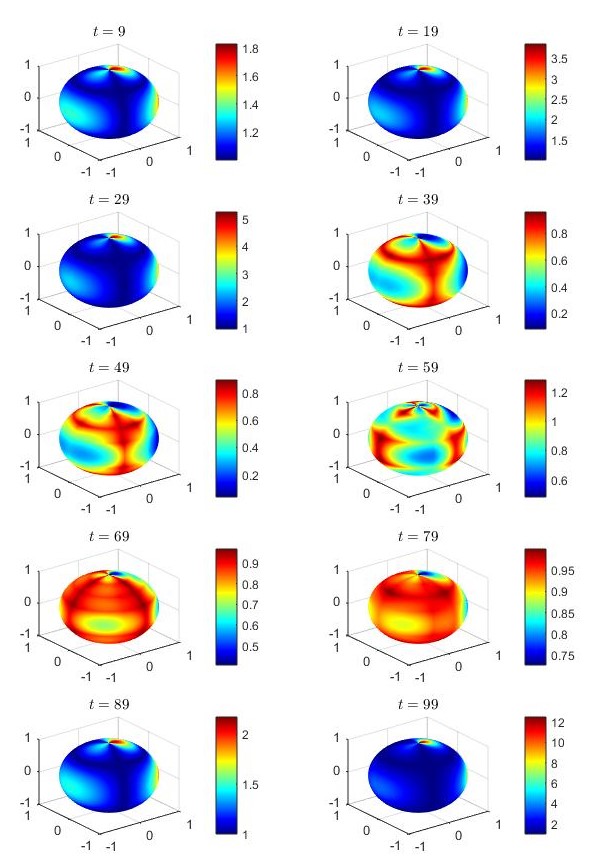}
\caption{Spherical functional response values at times $t=9,19,29,39,49,59,69,79,89,99$ (corresponding to realization 50)}
\label{fNLRESPINCRMISS}
\end{center}
\end{figure}
\begin{figure}[H]
\begin{center}
\includegraphics[height=0.9\textheight, width=\textwidth]{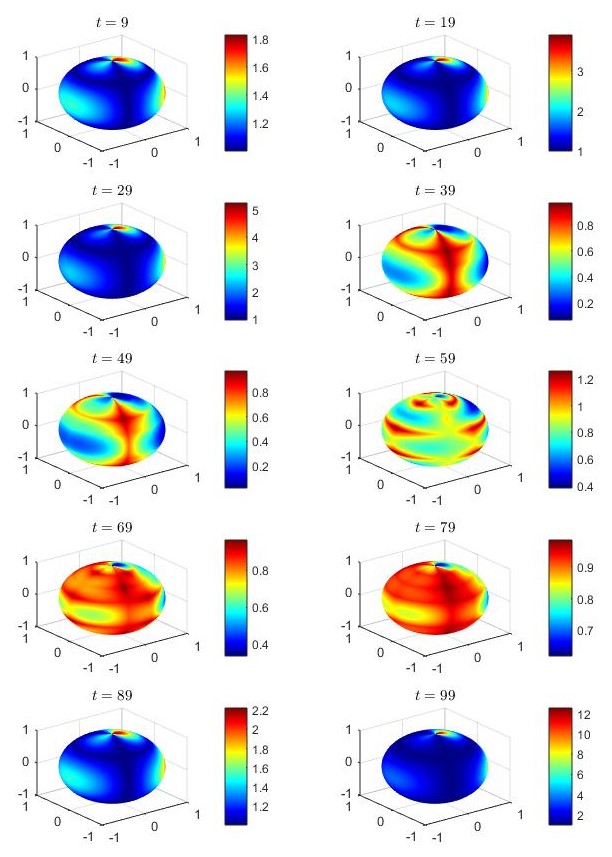}
\caption{Spherical functional response nonlinear  regression  predictions at times $t=9,19,29,39,49,59,69,79,89,99$ (corresponding to realization 50)}
\label{fNLPLUGINCRE}
\end{center}
\end{figure}

\begin{figure}[H]
\begin{center}
\includegraphics[height=0.88\textheight, width=\textwidth]{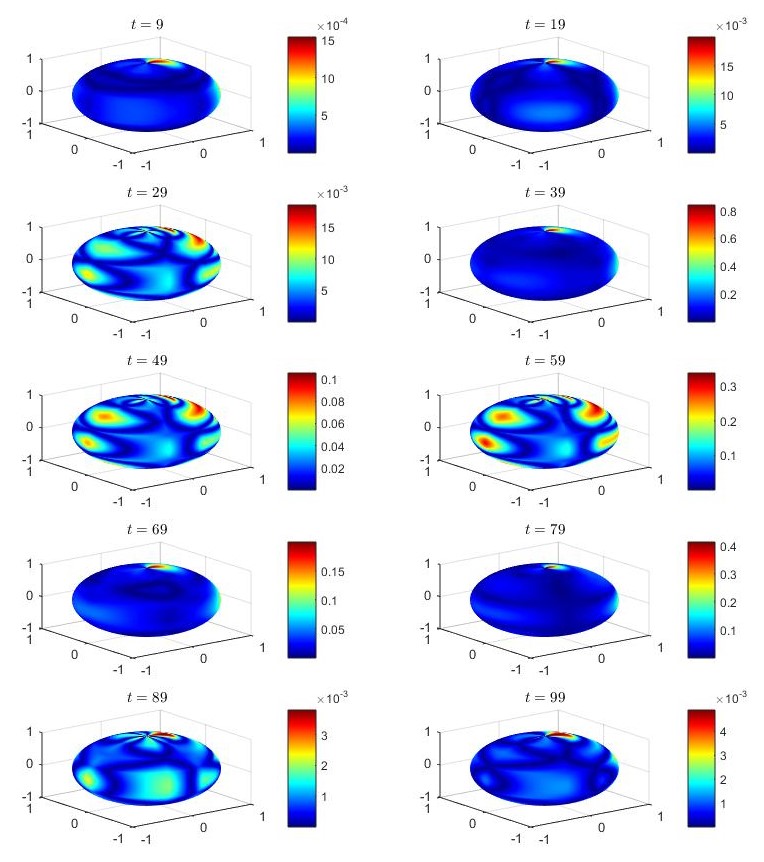}
\caption{Empirical mean absolute errors, based on 500 repetitions,  associated with the spherical functional nonlinear  regression predictor at times $t=9,19,29,39,49,59,69,79,89,99$}
\label{f4IncremissR500}
\end{center}
\end{figure}

The plug--in  spherical functional regression predictor (\ref{emppred}) is also computed.  Under this misspecified scenario a larger number of repetitions and functional sample sizes are required to improve the performance of the estimates computed from a truncated version of (\ref{emppred}). That is the reason why to illustrate the finite--sample--size  performance of  our plug-in functional regression predictor we have displayed the results for the functional sample size $T=110,$ considering $R=100$ (see Figure  26   in Section 2.2 of the Supplementary Material), and $R=500$ (see Figure \ref{f4IncremissR500}) repetitions. See also Figures \ref{fNLRESPINCRMISS} and \ref{fNLPLUGINCRE}, and  Section 2.2.1 of the Supplementary Material, where similar results are plotted for the decreasing LRD  operator eigenvalue sequence.


\section{Real--data application}
\label{secApplic}
This section considers the implementation of the proposed nonlinear spherical functional regression methodology in the prediction of the time evolution of downward solar radiation flux  earth maps, from the daily observation on the earth globe of atmospheric pressure  at high cloud bottom. A synthetic data set is generated   based on the nonlinear physical equations  governing the coupled dynamics of  both physical  magnitudes.

The nonlinear spatiotemporal mean of  the generated   downward solar radiation flux during the period  autumn--winter  is displayed in Figure \ref{f1app}. We summarize the main steps followed in the computation of this magnitude. A
 starting polar and azimuthal  angle grid  with  $180$ nodes   in the intervals $(0,\pi ),$   and  $(0,2\pi )$ is considered, and its associated  meshgrid in the corresponding two--dimensional angle interval is also implemented. The polar angle values are converted into latitudes  for the computation of the  Zenith Angle (ZA), which is one of the  input variables  of the physical equation defining Solar Irradiance (SI). Note that the  ZA  depends on the time of the year, and on the declination through a suitable trigonometric equation. The declination is given by a sinusoidal function also depending  on the day of the year.  Other parameters involved in these previous  physical equations  are  the Earth Radius  $ER=6371000$ in meters, and the Solar Constant  $G_{0} = 1361$ in W/$m^{2}$.

The SI is obtained  from the Clear Sky Index (CSI=0.8) by using the relationship

\begin{equation}
SI = G_{0}(CSI) \cos(ZA)/ \pi.
\label{ephsi}
\end{equation}

Finally, to reflect persistent in time of SI random fluctuations during autumn--winter, an  LRD isotropic spherical functional process is generated as error term, with time--varying independent coefficients defined from fractional Brownian motion, suitable scaled with  the solar irradiance standard deviation value 160.2262 (see Figure \ref{f2app}).

The nonlinear spatiotemporal mean  of the  atmospheric pressure  is computed (see Figure 32 of the Supplementary Material for spring--summer period, and  Figure \ref{f1appd2} for autumn--winter period),   from the  barometric equation, involving sea level pressure $P_{0}=1013.25,$ air molar mass  $M=0.029$ in  kilograms per mole, acceleration due gravity $g=9.81$ in m/s$^{2},$ ideal gas constant $RC=8.314,$ Kelvin temperature $TT=273+15,$ and usual range of heights at  bottom of high cloud, where  we have considered the height  interval (6000, 12000) in meters. Thus, pressures $pp$ obey the equation  $$pp=P_{0}(\exp(-M(g)(heights)/(RC(TT)))).$$  Again, a meshgrid is constructed from latitude and days to finally compute the daily  values of the    spherical functional isotropic regressor mean over a year from the input argument $pp,$ in terms of polar angle, amplitude of pressure variation with latitude and over days, and angular frequency corresponding to an annual cycle.  We have considered the value $49.6453$ of pressure standard deviation in the scaling of the  LRD isotropic spherical functional  time series generated to model temporal persistence of random fluctuations  (see Figure \ref{f3app}). The final generations are obtained by adding to the spatiotemporal mean computed, the  generated   spatiotemporal  LRD isotropic spherical process  (see Figure \ref{f4app}).

Note that although this synthetic spherical functional data set has been generated for the time period of one year, for illustration purposes, we have restricted our attention to the period autumn--winter, where low pressure is frequently observed at earth globe areas of medium and   high latitudes in both hemispheres, while the highest pressures are localized at tropical and subtropical areas.  The reverse situation  corresponds to the spring--summer period
 (see Figure 32 in Section 3 of the Supplementary Material). Indeed, this fact constitutes one of our main motivations to include in this nonlinear spherical functional regression problem the temporal information. Specially, regarding time--varying  covariates in this example, one can see how  spherical patterns displayed  by the spherical  functional regressor change drastically in these two periods (autumn--winter and spring--summer), affecting in a very different way  the response defined by solar irradiance.

The results after  implementation of the proposed nonlinear spherical   multiple   functional regression predictor are showed  in Figure \ref{f6app} where the original values of the response are plotted at the left--hand side for different times, while at the right--hand side the corresponding spherical functional regression predictor values are showed. Note that, the spherical functional regression predictor reproduces the magnitudes and the spherical patterns  of the spherical functional solar irradiance very close.

\begin{figure}[H]
\begin{center}
\includegraphics[height=0.85\textheight, width=0.8\textwidth]{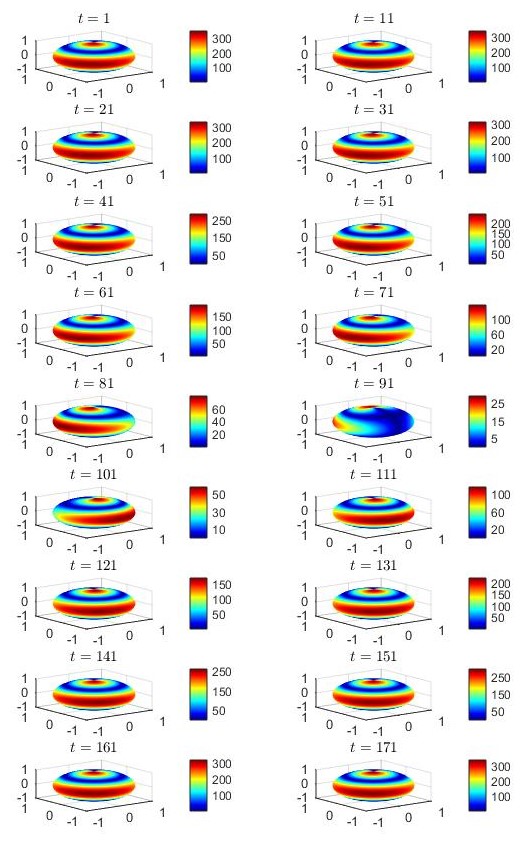}
\caption{Nonlinear response mean computed from evaluation of physical model (\ref{ephsi}) of downward solar radiation flux during autumn-winter.  Its spherical functional values are displayed at times \linebreak $t=1,11,21,31,41,51,61,71,81,91,101,111,121,131,141,151,161,171$}
\label{f1app}
\end{center}
\end{figure}

\begin{figure}[H]
\begin{center}
\includegraphics[height=0.95\textheight, width=0.8\textwidth]{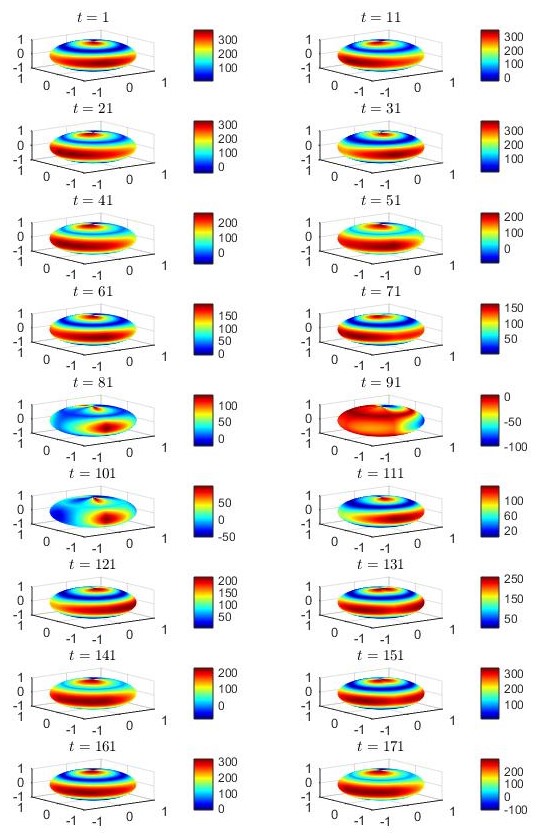}
\caption{Daily spherical functional response observations during autumn-winter period. Generated synthetic data of downward solar radiation flux are displayed at times \linebreak $t=1,11,21,31,41,51,61,71,81,91,101,111,121,131,141,151,161,171$}
\label{f2app}
\end{center}
\end{figure}

\begin{figure}[H]
\begin{center}
\includegraphics[height=0.95\textheight, width=0.8\textwidth]{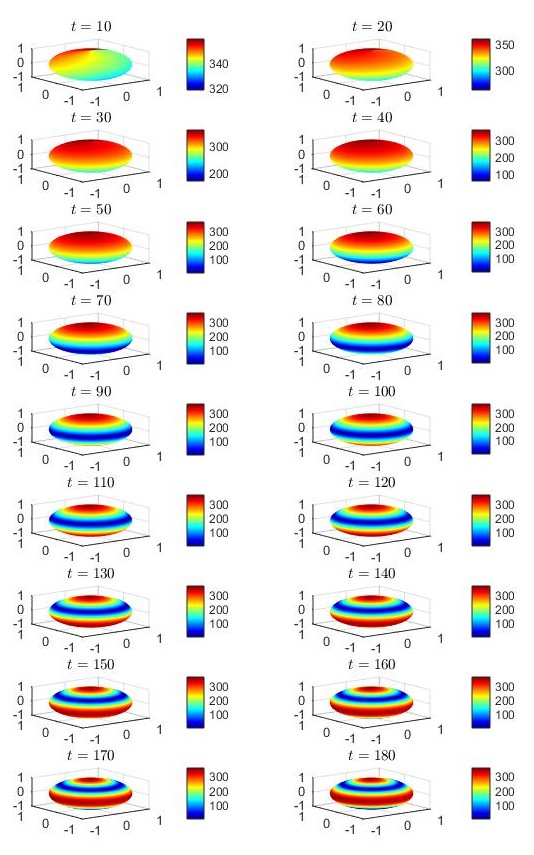}
\caption{Nonlinear spherical functional regressor  mean computed from barometric equation  during autumn-winter.  Its spherical functional values are displayed at times \linebreak $t=10, 20, 30, 40, 50, 60, 70, 80, 90, 100, 110,120,130,140,150, 160,170,180$}
\label{f1appd2}
\end{center}
\end{figure}

\begin{figure}[H]
\begin{center}
\includegraphics[height=0.95\textheight, width=0.8\textwidth]{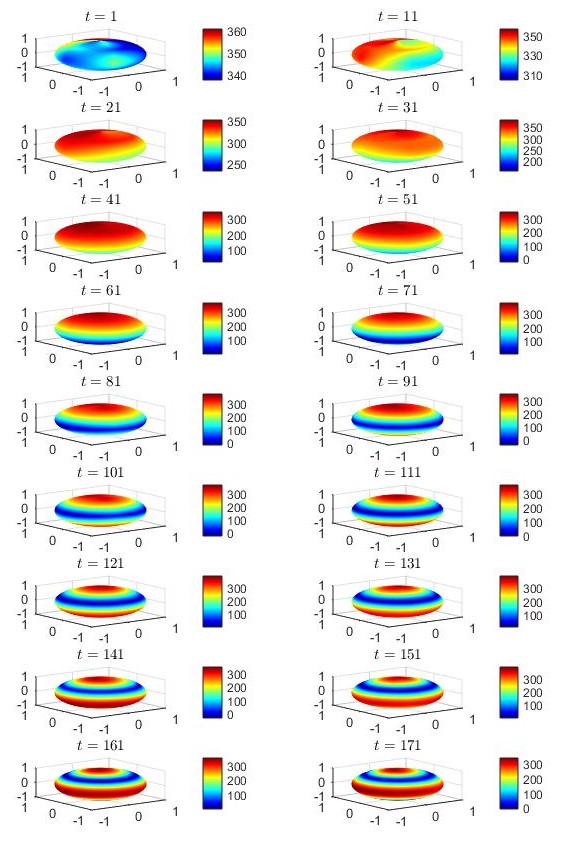}
\caption{Daily spherical functional regressor observations during autumn-winter period. Generated synthetic data   of atmospheric pressure at high
cloud bottom are displayed at times \linebreak $t=1,11,21,31,41,51,61,71,81,91,101,111,121,131,141,151,161,171$}
\label{f3app}
\end{center}
\end{figure}

\begin{figure}[H]
\begin{center}
\includegraphics[height=0.95\textheight, width=0.8\textwidth]{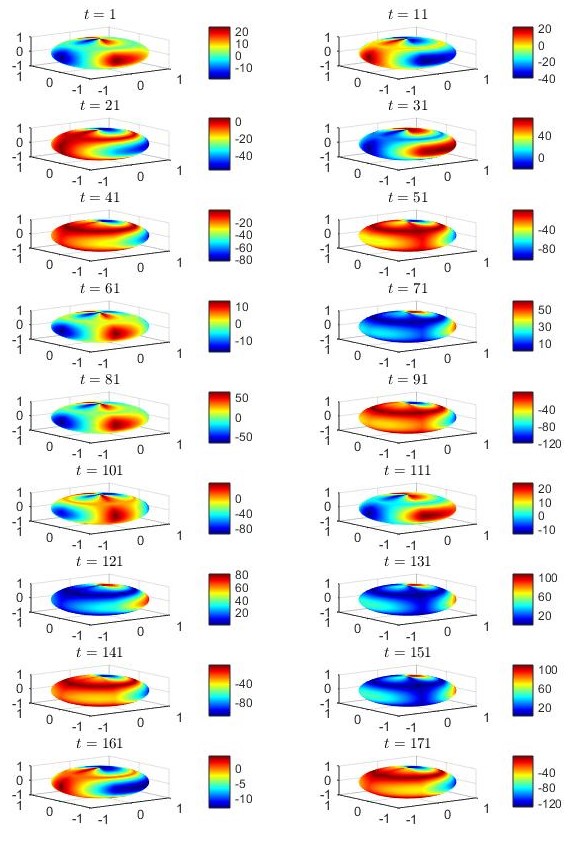}
\caption{Spatiotemporal LRD isotropic  spherical process at times   $t=1,11,21,31,41,51,61,71,$\linebreak $81,91,101,111,121,131,141,151,161,171$}
\label{f4app}
\end{center}
\end{figure}

\begin{figure}[H]
\begin{center}
\includegraphics[height=0.8\textheight, width=\textwidth]{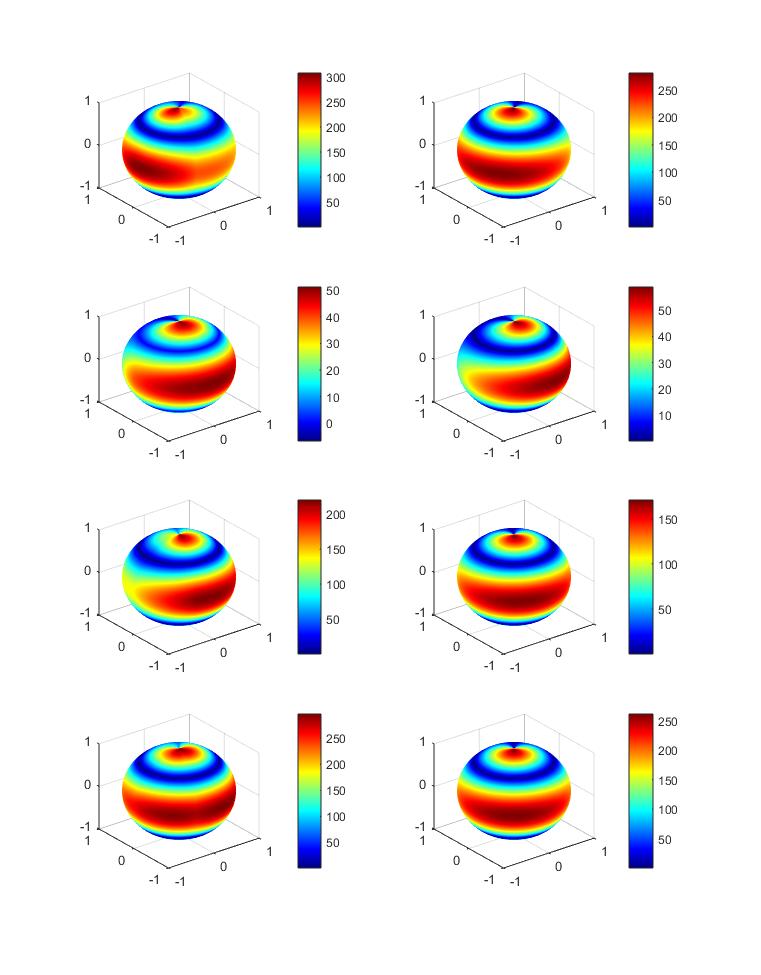}
\caption{Spherical functional response values (left-hand side) and spherical functional regression prediction values (right-hand side) at times $t=41,101,121, 141$ from top to the bottom respectively}
\label{f6app}
\end{center}
\end{figure}

\begin{figure}[H]
\begin{center}
\includegraphics[height=0.95\textheight, width=0.8\textwidth]{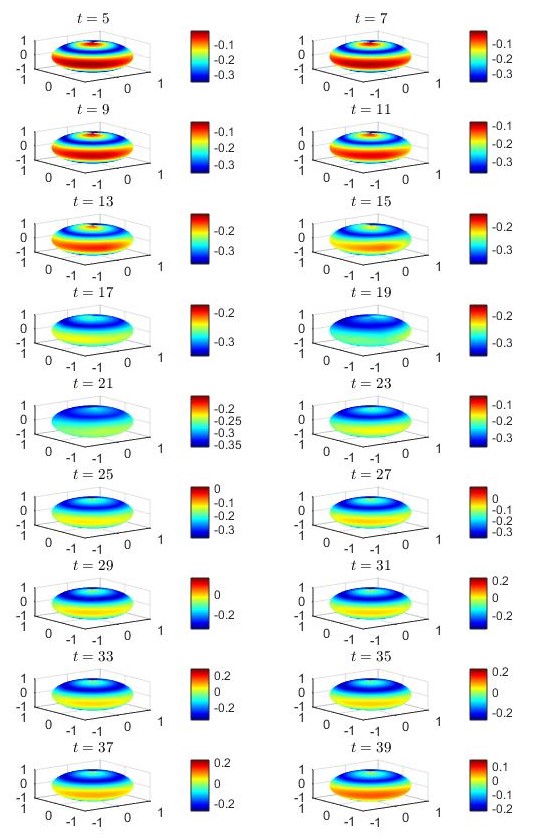}
\caption{Spherical functional 5--fold-cross validation errors associated with response regression predictor. Their spherical functional values are displayed at times \linebreak $t=5, 7, 9, 11, 13, 15, 17, 19, 21, 23, 25, 27, 29, 31, 33, 35, 37, 39$}
\label{f7app}
\end{center}
\end{figure}

\begin{figure}[H]
\begin{center}
\includegraphics[height=0.95\textheight, width=0.8\textwidth]{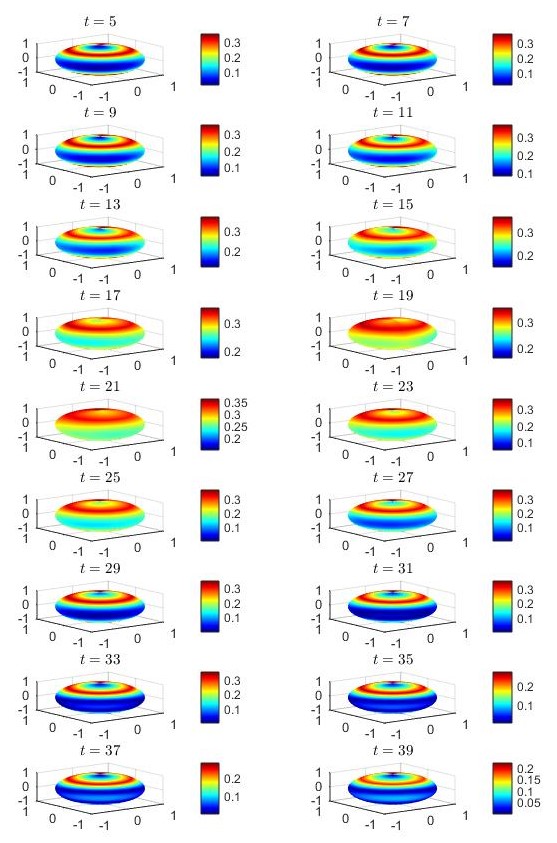}
\caption{Spherical functional 5--fold-cross validation absolute errors associated with response  plug--in regression  predictor. Their spherical functional values are displayed at times \linebreak $t=5, 7, 9, 11, 13, 15, 17, 19, 21, 23, 25, 27, 29, 31, 33, 35, 37, 39$}
\label{f9app}
\end{center}
\end{figure}
The performance of the proposed regression prediction technique is illustrated by the implementation of $5$--fold random cross validation. The spherical functional $5$--fold random cross validation errors obtained by computing the proposed functional regression predictor are displayed in Figure \ref{f7app}. The absolute $5$--fold random cross validation errors associated with  the plug--in functional regression predictor, after minimum contrast estimation of the error term   are also plotted in Figure \ref{f9app}. Note that  a slight difference between regression   and  plug-in regression  performances is observed in the order of magnitude of the modulus of the $5$--fold cross validation errors.

\section{Conclusions}
\label{secconclusions}

This paper opens a new research line within the context of nonlinear multiple functional  regression from manifold   functional data strong--correlated in time. Particularly, the framework of   connected  and compact two--point homogeneous spaces is adopted. The  formulated multiple functional regression model, with functional response, functional regression parameters and time--dependent scalar covariates, goes beyond the  assumptions of weak--dependent, and the Euclidean setting usually adopted in the current literature in functional regression.  The simulation study and real--data application illustrate the interest of the presented approach, allowing the incorporation of time in the covariates,   to represent the  evolution of nonlinear associations between the manifold response  and regressors. In particular, this aspect is crucial  when   changes over time arise modifying  in a substantial way the manifold patterns of functional response and regressors. On the other hand, the linear case addressed in the Supplementary Material (one way FANOVA model in the spatiotemporal spherical context) by projection into a different orthogonal basis, defined from  Jacobi polynomials, allows the prediction of local behaviors in a neighborhood of the pole of the zonal functions considered, which can be of interest in detecting small local changes in the functional response mean in those small areas near the pole.

One of  the most outstanding problems in functional regression is model selection. A  wide variety of statistical tests has been derived for model checking in functional regression. Statistical functional regression model testing techniques  have mainly  been developed for independent and  weak--dependent functional data. See,  for example,  \cite{Dette11};  \cite{Cuesta}; \cite{FebreroBande15}; \cite{Maistre20};  \cite{Patilea16}, in the framework  of functional regression under independent data, and \cite{Alvarez22};  \cite{Constantinou18}; \cite{GMRMFB2024}; \cite{Gorecki18}; \cite{Horman18}; \cite{Horvath2014}; \cite{HorvathReed13};  \cite{Zhang16}, for the case of weak--dependent data in the context of functional time series.

A challenging topic is model checking  under strong dependent functional data, which is the scenario analyzed in the present  paper  (see,  e.g., \cite{Ruiz-MedinaCrujeiras24}, for SRD/LRD model checking in the spectral domain in manifold--supported functional time series). The proposed strategy  to test the suitability of the formulated nonlinear functional regression model from the observed data is to first apply  the results in \cite{Ruiz-MedinaCrujeiras24} for SRD model checking. If the null hypothesis on SRD is not rejected, then one can apply the tools provided in the above cited references, under a weak--dependent data scenario, for  checking additional regression model characteristics.  Otherwise, as given in the simulation study undertaken,  and real--data application, in Sections \ref{secsimulation} and  \ref{secApplic} of the paper,  an empirical analysis, and  data--driven model checking   can be respectively implemented,  when the alternative hypothesis on LRD holds.  In particular, in Section 5, model checking has been performed  in terms of $5$--fold random cross validation. We theoretical address the topic of functional  regression model checking  under  strong--dependent functional data in a subsequent paper.

\subsection*{Acknowledgements}

This work has been supported in part by projects
MCIN/ AEI/PID2022-142900NB-I00,  and  CEX2020-001105-M MCIN/ AEI/10.13039/501100011033.


\begin{thebibliography} {9}
 \normalsize 

\bibitem{AbrySellan1996}
Abry, P., Sellan, F (1996)The wavelet-based synthesis for fractional brownian motion proposed by F. Sellan and Y. Meyer: Remarks and fast implementation. Applied and Computational Harmonic Analysis 3(4), 377–383  https://doi.org/10.1006/acha.1996.0030

\bibitem{Alegria21}
Alegr\'{i}a A,  Bissiri PG, Cleanthous G, Porcu E, White P (2021) Multivariate isotropic random fields on spheres: Nonparametric Bayesian modeling and L$^{p}$ fast approximations. Electronic Journal of Statistics 15(1):2360--2392. https://doi.org/10.1214/21-EJS1842

\bibitem{Alvarez22}
Alvarez-Lie\'ebana, J., L\'opez-P\'erez, A., Gonz´alez-Manteiga, W., Febrero-Bande, M (2025)  A goodness-of-fit test for functional time series with applications to Ornstein-Uhlenbeck processes.   Computational Statistics and Data Analysis 203 108092.  https://arxiv.org/abs/2206.12821

\bibitem{Alvarez17}
\'{A}lvarez--Li\'{e}bana J, Ruiz--Medina MD (2017) The effect of the spatial domain in FANOVA models with ARH(1)  error term. Statistics and Its Interface 10(4):607--628.
https://dx.doi.org/10.4310/SII.2017.v10.n4.a7

\bibitem{ALVRuiz19}
\'Alvarez--Li\'ebana J, Ruiz--Medina MD (2019) Prediction of air pollutants PM$_{10}$ by ARBX(1) processes. Stochastic Environmental Research and Risk Assessment (33):1721--1736.
https://doi.org/10.1007/s00477-019-01712-z

\bibitem{Andrews99}
Andrews GE, Askey R, Roy R (1999) Special Functions. In Encyclopedia of Mathematics and its Applications. Vol 71: Cambridge University Press, Cambridge



\bibitem{Bosq2000}
Bosq D (2000) Linear Processes in Function Spaces. Springer, New York

\bibitem{Dette11}
B\"ucher, A., Dette, H., Wieczorek, G (2011) Testing model assumptions in functional regression models. Journal of Multivariate Analysis 102(10), 1472–1488 https: //doi.org/10.1016/j.jmva.2011.05.014

\bibitem{Caponera21}
Caponera A (2021) SPHARMA approximations for stationary functional time series in the sphere. Stat Infer Stoch Proc (24):609--634. https://doi.org/10.1007/s11203-021-09244-6

\bibitem{CaponeraMarinucci}
Caponera A, Marinucci D (2021) Asymptotics for spherical functional autoregressions. Ann Stat 49(1):346--369. https://doi.org/10.1214/20-AOS1959


\bibitem{Cardot07}
Cardot H, Mas A, Sarda P (2007) CLT in functional linear regression models. Probab Theory Relat Fields 138:325--361. https://doi.org/10.1007/s00440-006-0025-2

\bibitem{Cartan1927}
Cartan E (1927) Sur certaines formes Riemanniennes remarquables des g\'{e}om\'{e}tries \'{a} groupe  fondamental simple. Ann Sci \'{E}c Norm Sup\'{e}r 44(3):345--467. https://doi.org/10.24033/asens.781


\bibitem{Constantinou18}
Constantinou, P., Kokoszka, P., Reimherr, M (2018)  Testing separability of functional time series. Journal of Time Series Analysis 39(5), 731–747 https://doi.org/10. 1111/jtsa.12302

\bibitem{Crambes13}
Crambes C,  Mas A (2013) Asymptotics of prediction in functional linear regression with functional outputs. Bernoulli 19(5B):2627--2651. https://doi.org/10.3150/12-BEJ469


\bibitem{Cuesta}
Cuesta--Albertos, J.A., Garc\'{\i}a-Portugu\'es, E., Febrero-Bande, M., Gonz\'alez--Manteiga, W  (2019) Goodness-of-fit tests for the functional linear model based on randomly pro-jected empirical processes. The Annals of Statistics 47(1), 439–467  https: //doi.org/10.1214/18-AOS1693

\bibitem{Cuevas02}
Cuevas A, Febrero M,  Fraiman R (2002) Linear functional regression: The case of a fixed design and functional response. Canadian J Statistics 30(2):285--300. https://doi.org/10.2307/3315952


\bibitem{FebreroBande15}
Febrero--Bande M, Galeano P, Gonz\'alez--Manteiga W (2017) Functional principal component regression and functional partial least--squares regression: an overview and a comparative study. International Statistical Review 85(1):61-83. https://www.jstor.org/stable/44840871


\bibitem{Forster}
Forster P, Ramaswamy V, Artaxo P, Berntsen T, Betts R, Fahey DW, Haywood J, Lean J, Lowe DC, Myhre G (2007) Changes in atmospheric constituents and in radiative forcing. In: Climate Change 2007: The Physical Science Basis. Cambridge University Press, New York, NY, pp 129-234

\bibitem{GMRMFB2024}
Gonz\'alez-Manteiga, W., Ruiz-Medina, M.D., Febrero-Bande, M  (2024) Linear parametric model checks for functional time series. https://arxiv.org/abs/2303.09

\bibitem{Gorecki18}
G\'orecki, T., H\"ormann, S., Horv\'ath, L., Kokoszka, P (2018)  Testing normality of functional time series. Journal of Time Series Analysis 39(4), 471–487 https://doi.org/ 10.1111/jtsa.12281

\bibitem{Gine1975}
Gin\'e  E  (1975) The addition formula for the eigenfunctions of the Laplacian. Advances in Mathematics 18(1):102--107. https://doi.org/10.1016/0001-8708(75)90003-1


\bibitem{Horman18}
H\"ormann, S., Kokoszka, P., Nisol, G  (2018) Testing for periodicity in functional time series. The Annals of Statistics 46(6A), 2960–2984 https://doi.org/10.1214/

\bibitem{Horvath2014}
Horv´ath, L., Kokoszka, P., Rice, G.: Testing stationarity of functional time series.
Journal of Econometrics 179(1), 66–82 (2014) https://doi.org/10.1016/j.jeconom. 2013.11.002


\bibitem{HorvathReed13}
Horv\'ath, L., Reeder, R (2013) A test of significance in functional quadratic regression. Bernoulli 19(5A), 2120–2151  https://doi.org/10.3150/12-BEJ446

\bibitem{LiRobinsonShang19}
Li D, Robinson PM,  Shang HL (2019) Long--range dependent curve time series. J of the American Statistical Association 115:957--971. https://doi.org/10.1080/01621459.2019.1604362




\bibitem{Maistre20} Maistre, S., Patilea, V (2020)  Testing for the significance of functional covariates. Journal of Multivariate Analysis 179, 104648 https://doi.org/10.1016/j.jmva.2020. 104648


\bibitem{MaMalyarenko}
Ma C, Malyarenko A (2020) Time varying isotropic vector random fields on compact two--point homogeneous spaces. J Theor Probab 33:319--339. https://doi.org/10.1007/s10959-018-0872-7


\bibitem{Ma99}
Mas A (1999) Normalit\'e asymptotique de l'estimateur empirique de l' op\'erateur d' autocorr\'elation d 'un processus ARH(l). C R Acad Sci Paris 329(10):899--902. https://doi.org/10.1016/S0764-4442(00)87496-0


\bibitem{Ovalle23}
Ovalle--Mu\~noz DP, Ruiz--Medina MD (2024) LRD spectral analysis of multifractional functional time series on manifolds. TEST. https://doi.org/10.1007/s11749--023--00913--7


\bibitem{Panaretos13}
Panaretos VM, Tavakoli S (2013a) Fourier analysis of stationary time series in function space. Ann Statist 41(2):568--603. https://doi.org/10.1214/13-AOS1086



\bibitem{Patilea16} Patilea, V., S´anchez-Sellero, C., Saumard, M.: Testing the predictor effect on a functional response. Journal of the American Statistical Association 111(516), 1684–1695 (2016) https://doi.org/10.1080/01621459.2015.1110031

\bibitem{RuizMedina11}
Ruiz--Medina MD (2011) Spatial autoregressive and   moving average Hilbertian processes. Journal of Multivariate
Analysis 102:292--305. https://doi.org/10.1016/j.jmva.2010.09.005

\bibitem{RuizMedina12a}
Ruiz--Medina MD (2012a)  New challenges in spatial and spatiotemporal functional statistics for high--dimensional data. Spatial Statistics 1:82--91. https://doi.org/10.1016/j.spasta.2012.02.006

\bibitem{RuizMedina12b}
Ruiz--Medina MD (2012b) Spatial functional prediction from spatial
autoregressive Hilbertian processes. Environmetrics 23(1):119--128. https://doi.org/10.1002/env.1143


\bibitem{RuizMedina16}
Ruiz--Medina MD (2016) Functional analysis of variance for Hilbert--valued multivariate fixed effect models. Statistics 50:689--715. https://doi.org/10.1080/02331888.2015.1094069

\bibitem{RuizMedina2022}	
Ruiz--Medina MD (2022) Spectral analysis of long range dependence functional time series. Fractional Calculus and Applied Analysis 25:1426--1458. https://doi.org/10.1007/s13540-022-00053-z

\bibitem{Ruiz-MedinaCrujeiras24}
Ruiz-Medina, M.D., Crujeiras, R.M.: Testing LRD in the spectral domain for functional time series in manifolds (2024). https://arxiv.org/abs/2411.07731

\bibitem{RuizMD18}
Ruiz--Medina MD, Miranda D, Espejo RM (2019) Dynamical multiple regression in function spaces, under kernel regressors, with ARH(1) errors. TEST 28:943--968. https://doi.org/10.1007/s11749-018-0614-2

\bibitem{Zhang16}
Zhang, X.: White noise testing and model diagnostic checking for functional time series. Journal of Econometrics 194(1), 76–95 (2016) https://doi.org/10.1016/j. jeconom.2016.04.004

\bibitem{Wild09}
Wild M (2009) Global dimming and brightening: A review. Journal of Geophysical Research: Atmospheres 114. https://doi.org/10.1029/2008JD011470

\end{thebibliography}
\end{document}